\newtheorem{remark}{Remark}
\newtheorem{proposition}{Proposition}
\newtheorem{theorem}{Theorem}
\newtheorem{definition}{Definition}
\newtheorem{assumption}{Assumption}
\newtheorem{example}{Example}
\xpatchcmd{\@thm}{\thm@headpunct{.}}{\thm@headpunct{}}{}{}
\newcommand{\m}{\mathbb}
\title{\large \bf Computing control invariant sets of nonlinear systems: decomposition and distributed computing}
\author{
    \centerline{\normalsize Benjamin Decardi-Nelson$^{a}$, Jinfeng Liu$^{a,}$
    \thanks{Corresponding author: J. Liu. Tel: +1-780-492-1317. Fax: +1-780-492-2881. Email: jinfeng@ualberta.ca}}
    \vspace{5mm} \\
    \centerline{\small $^{a}$ Department of Chemical \& Materials Engineering, University of Alberta,}\\
    \centerline{\small Edmonton AB T6G 1H9, Canada}
}
\begin{document}

\date{}
\maketitle
\setstretch{1.39}

\begin{abstract} \label{sec:abstract}
	In this work, we present a distributed framework based on the graph algorithm for computing control invariant set for nonlinear cascade systems. The proposed algorithm exploits the structure of the interconnections within a process network. First, the overall system is decomposed into several subsystems with overlapping states. Second, the control invariant set for the subsystems are computed in a distributed manner. Finally, an approximation of the control invariant set for the overall system is reconstructed from the subsystem solutions and validated. We demonstrate the efficacy and convergence of the proposed method to the centralized graph-based algorithm using several numerical examples including a six dimensional continuous stirred tank reactor system.
\end{abstract}

\noindent{\bf Keywords:} Nonlinear systems; Outer approximation; Invariant sets; Graph theory; Distributed computation.


\section{Introduction} \label{sec:introduction}

Many engineered systems such as chemical processes are usually designed to operate within a prescribed safe set -- usually in the form of state constraints. However, the ability of the state of these engineered systems to stay within the prescribed safe set is hindered by limits on the available control energy used to control these systems \cite{homer2017}. Thus, to avoid unsafe and catastrophic events during operation, it is very critical that the set of states for which a control input exist to keep the system's states within the state constraint at all times is known prior to operation or for controller design purposes. This set of states is known as a control invariant set. 
The concept of control invariant set has close connections with reachable sets \cite{mitchell2005}, null controllable sets \cite{homer2020} and viability kernels \cite{aubin2009}.

Control invariant sets have been found to be important in controller design and assessment \cite{blanchini1999}. In particular, it has been found to be useful as a terminal region constraint for model predictive control strategies \cite{mayne2001,cannon2003}. Despite the well known importance of control invariant sets, accurate and efficient determination of control invariant sets is still an open issue, even for linear systems. This is because determining a control invariant set for a given system involves considering all possible trajectories of the system. Unfortunately, computing all possible system trajectories become challenging when the system dimension is high, a phenomenon known as curse of dimensionality. For this reason, many current state-of-the-art algorithms have inadequate scalability when it comes to the dimension of the system states.

To this end, several algorithms and methods for computing control invariant sets have been proposed \cite{decardi2021,rakovic2010,rungger2017,fiacchini2010,bravo2005}. Linear systems which are usually based on the well-known dynamic programming method \cite{bertsekas1972} and polytopic computations, have in particular received considerable attention in literature \cite{rakovic2010,rungger2017}. However, these methods for linear time invariant (LTI) systems are based on vertex enumeration of the set's full-dimensional polytopic representation \cite{rungger2017}. The number of vertices in the polytope grows exponentially as the system's dimension increase, resulting in intractable computation. For nonlinear systems, only a few results exist with majority of the algorithms either being grid-based \cite{homer2017,mitchell2005,homer2018} or set-based \cite{decardi2021}. Similar issues which are considerably more challenging exist for nonlinear systems.

A trade-off between set complexity and computational complexity is frequently made to overcome scalability concerns in determining control invariant sets. The methods presented in \cite{kurzhanski2000,frehse2011} have been successful in computing control invariant set of high dimensional systems but assume a control invariant set of prescribed complexity or shape which results in faster computation but overly conservative sets. A decomposition and reconstruction based method for exact computation of backward reachable sets of self-contained nonlinear subsystems using the Halmilton-Jacobi (HJ) formulations was developed in \cite{chen2018}. In \cite{riverso2017,rakovic2010}, decentralized and distributed algorithms were developed to obtain a family of practical positively invariant sets of linear systems. In \cite{li2020}, a decomposition based method was proposed to compute the backward reachable sets of nonlinear systems. In these cases, however, the missing interconnection information were treated as disturbances which results in conservative results. 

In this paper, we focus on cascade systems and present a system decomposition method and a distributed approach for computing control invariant sets. The proposed algorithm exploits the structure of the interconnections within a process network and decomposes the entire process network into smaller subsystems. Following the decomposition, a distributed approach is developed to compute the control invariant set of the entire system. The proposed approach adopts graph-based algorithms in computing the control invariant sets \cite{decardi2021}. In contrast to other works on distributed computation of control invariant sets, our proposed approach produces sets that approximates the largest control invariant set since the missing interconnection information is not treated as disturbances. The proposed distributed computing algorithm may be extended to more general nonlinear systems. Some preliminary results of this work were reported in \cite{decardi2022ACC}. Compared with \cite{decardi2022ACC}, this paper provides significantly extended explanations and discussion, and significantly more results exploring various features of the proposed distributed computing approach. 

The remainder of the paper is organized as follows: Section~\ref{sec:problem_formulation} is concerned with preliminaries and problem formulation as well as a brief introduction to directed graph representation of dynamical system with controls and control invariant set investigation of the graph. In Section~\ref{sec:system_decomposition}, we present the system decomposition method with overlapping states and discuss the implications of computing control invariant set via system decomposition. In Section~\ref{sec:cis_computation}, we present an algorithm to compute the outer approximation of the largest control invariant set in a distributed manner. We also briefly discuss some computational issues in this section. We present a several examples in Section~\ref{sec:examples} to demonstrate the efficacy of the proposed method and then conclude the paper in Section~\ref{sec:concluding_remarks}. 

\section{Problem formulation and background} \label{sec:problem_formulation}

\subsection{Notation}

Throughout this article, the operator proj$_i(x)$ denotes the projection of the set or point $x$ onto the subspace of subsystem $i$. The operator $\slash$ denotes set subtraction such that $\m A \slash \m B = \{ x : x \in \m A, x \notin \m B \}$. $G = (V, E)$ represents a directed graph with $V$ denoting the set of vertices of the graph and $E$ denoting the set of ordered pairs of vertices known as edges. The operator $|\cdot|$ denotes the Euclidean norm of a vector.

\subsection{Problem formulation}

We are concerned with a class of nonlinear systems composed of $N$ subsystems coupled together in a cascade manner. 
\begin{figure}[tbp] 
    \center{\includegraphics[width=0.6\columnwidth]{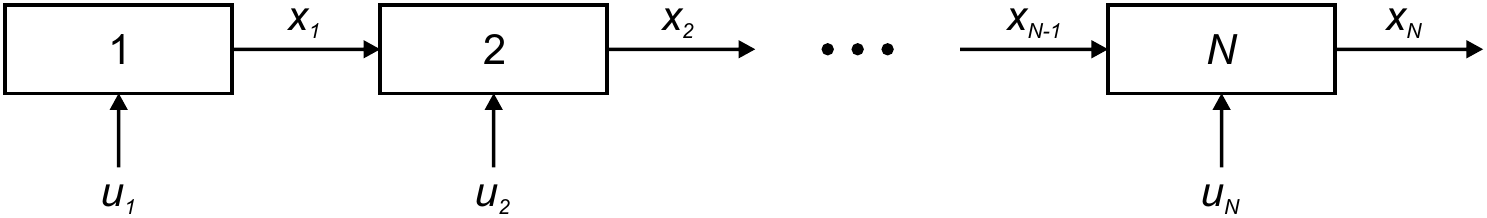}}
    \caption{\label{fig:cascade_system} A schematic diagram of a cascade process system}
\end{figure}
A schematic diagram of the cascade process system is shown in Figure~\ref{fig:cascade_system}. The dynamics of the overall system is described by
\begin{equation} \label{eqn:system}
    x^+ = f(x,u)
\end{equation}
where $x \in \m R^n$ is the current state of the system, $u \in \m R^m$ is the current control input, and $x^+ \in \m R^n$ denotes the state of the system at the next sampling time. We assume that the state and the control input of the system are restricted to be in the compact constraint sets $X \subset \m R^n$ and $U \subset \m R^m$ respectively. Without loss of generality, we also assume that the vector field $f:X \times U \rightarrow X$ is a sufficiently smooth vector field in $X$. The coupling between the subsystems is such that the first subsystem is independent of the other subsystems. Also, any subsystem $i$, other than the first subsystem, is directly affected by the upstream subsystem $i-1$ but not the downstream subsystem $i+1$. The structure of the  cascade system considered in this article is represented by
\begin{subequations} \label{eqn:decomposition}
    \begin{align}
        x_1^+ &= f_1(x_1,u_1) \\
        x_2^+ &= f_2(x_2,u_2) + g_2(x_1) \\
        \vdots \nonumber \\
        x_N^+ &= f_N(x_N,u_N) + g_N(x_{N-1})
    \end{align}
\end{subequations}
where $f_i$ denotes the local dynamics of state $x_i$, and $g_i$ denotes the dynamics of the coupling between the states $x_i$ and $x_{i-1}$. The subscript $i=1,2, \cdots, N$ represents the $i$th subsystem. The state and input constraints for subsystem $i$ is given by $X_i = \text{proj}_i(X)$ and $U_i = \text{proj}_i(U)$ respectively.

Before we begin our discussion, let us introduce the following definitions which are central and referred to throughout this article.
\begin{definition}[Forward invariant set \cite{blanchini1999}] \label{def:postively_invariant_set}
    A set $R \subseteq X$ is said to be a forward or positively invariant set of the system $x^+=f(x)$ if for every $x \in R$, $f(x) \in R$. 
\end{definition}

\begin{definition}[Control invariant set \cite{blanchini1999}]\label{def:control_invariant_set}
A set $R \subseteq X$ is said to be a control invariant set (CIS) of system (\ref{eqn:system}) if for every $x \in R$, there exist a feedback control law $u = \mu(x) \in U$ such that $R$ is forward invariant for the closed-loop system $f(x,u)$. 
\end{definition}

\begin{definition}[Largest control invariant set \cite{kerrigan2001}]\label{def:largest_control_invariant_set}
A set $R_X \subseteq X$ is said to be the largest (with respect to inclusion) control invariant set of system (\ref{eqn:system}) if $R_X$ is control invariant and contains all other control invariant sets contained in $X$.
\end{definition}

In general, the state constraint $X$ for a given system is not control invariant. However, one may wish to find the largest control invariant set $R_X$ contained in $X$ for controller design and assessment purposes.  The goal of this article is to present a framework for computing an approximation of the largest control invariant set $R_X$ when the state dimension $n$ is too large to make the standard GIS algorithm tractable \cite{decardi2021}. At present, this happens when $n > 4$. This is because the number of cells generated in the standard GIS algorithm grows exponentially in the state dimension.

Our solution to computing an approximation of $R_X$ is to decompose system \eqref{eqn:system} into $M$ lower dimensional subsystems with overlapping states.

\subsection{Graph-based invariant set algorithm}

In this section, we briefly introduce the concept of graph-based invariant set (GIS) algorithm. The reader may refer to the works of Decardi-Nelson and Liu \cite{decardi2021} and Osipenko \cite{osipenko2007} for a more detailed discussion.

Given a dynamical system in the form of Equation~\eqref{eqn:system}, the idea is to cast the goal of finding $R_X$ as a graph theoretical problem. To achieve this, we first rewrite the function $f$ as set a set-valued map parameterized by the input 
\begin{equation} \label{eqn:differential_inclusion}
    F(x) := f(x,U) =  \{f(x,u) \}_{\cup_{u \in U}}
\end{equation}
This makes system~\eqref{eqn:system} a difference inclusion of the form
\begin{equation}\label{eqn:difference_inclusion}
    x^+ \in F(x)
\end{equation}
The function $F$ associates the state $x$ with the subset $F(x)$ of all feasible next states. To construct a digraph representation (also known as symbolic image) of system~\eqref{eqn:system}, the state constraint $X$ first need to be quantized into a collection of cells $\mathcal{C} = \{ B_1, B_2, \cdots, B_l \}$ such that
\begin{equation} \label{eqn:cell_union}
    R = \cup_{B_i \in \mathcal{C}}
\end{equation}
is an outer covering of $R_X$. The diameter $d$ of the collection is given as 
\begin{equation*}
    d = \text{diam}(\mathcal{C}) := \max_{B_i \in \mathcal{C}}\text{diam}(B_i)
\end{equation*} 
where diam($B_i$) = $\text{sup}\{ |x-y|:x,y \in B_i \}$.

Following the quantization of the state constraint, the dynamics of system~\eqref{eqn:system} is approximated using a directed graph $G=(V,E)$ such that
\begin{align}
    V &= \mathcal{C} ~ and \label{eqn:vertices}\\
    E &= \{ (B_i, B_j) \in \mathcal{C} \times \mathcal{C} ~ | ~ F(B_i) \cap B_j \neq \emptyset  \} \label{eqn:edges}
\end{align}
where $V$ is the vertices of the directed graph and $E$ denotes the set of ordered pair of vertices known as edges. From Equations~\eqref{eqn:vertices} and \eqref{eqn:edges}, it is easy to see that the finer the quantization (the smaller the diameter of the collection), the more accurate the approximation of the dynamics of system~\eqref{eqn:system} using a directed graph $G$.

The cells that outer approximate $R_X$ may be obtained by analyzing the resulting directed graph $G$. Specifically, the goal is to split the vertices of $G$ into two groups namely, leaving and non-leaving cells. The leaving cells are cells that have only finite paths passing through them whereas the nonleaving cells are the cells that have infinite admissible paths passing through them. Let us recall the following definitions in graph theory.
\begin{definition}[Strongly connected graph] \label{def:strongly_connected_graph}
    A directed graph $G=(V,E)$ is said to be strongly connected if there is an admissible path in both directions between each pair of vertices of the graph.
\end{definition}

\begin{definition}[Strongly connected component (SCC)] \label{def:strongly_connected_component}
    A strongly connected component of a directed graph $G=(V,E)$ is a maximal strongly connected subgraph of $G$. 
\end{definition}

Algorithmically, finding the non-leaving cells involves finding the SCC of $G$ and any vertex that is not in the SCC but has a path to any vertex in the SCC. A graphical illustration of the shown in Figure~\ref{fig:gis_algorithm}. 
 \begin{figure}[tbp] 
     \center{\includegraphics[width=0.6\columnwidth]{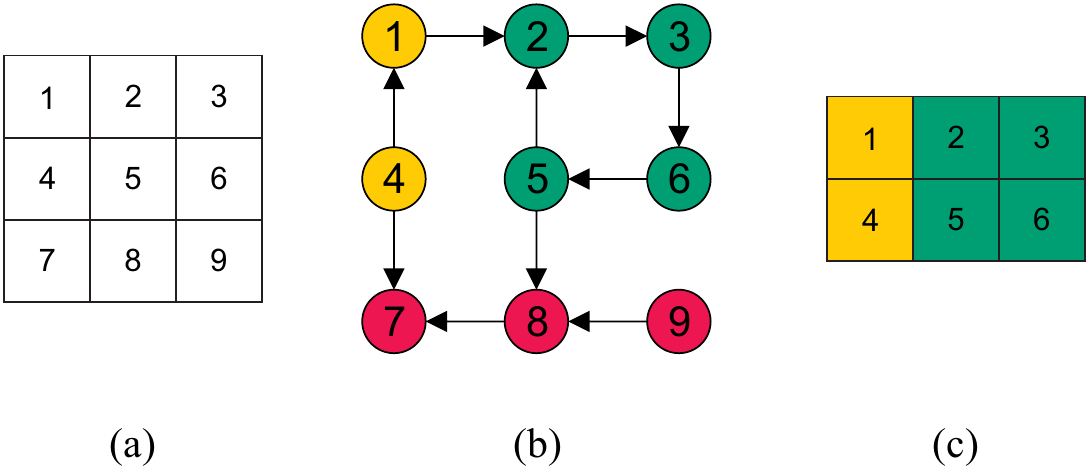}}
     \caption{\label{fig:gis_algorithm} A graphical depiction of the GIS algorithm. (a) Quantization of the state constraint. (b) Graph construction and analysis. The green cells constitute the SCC; the yellow cells have an admissible path to any vertex of the SCC; the red cells do not belong to the SCC nor have a path to the SCC (c) The final selected cells with infinite admissible paths passing through them. The remaining or leaving cells are discarded.}
 \end{figure}
The following proposition summarizes how an outer approximation of $R_X$ may be obtained using graph-based tools.

 \begin{proposition}[Non-leaving cells \cite{decardi2021}] \label{prop:non-leaving_cells}
    Let G=(V,E) be a directed graph approximation of the dynamics of system \eqref{eqn:system} with respect to the collection $\mathcal{C}$. Then
    \begin{enumerate}[i.]
        \item The vertices $V_s$ of the SCC of G has infinite admissible paths passing through them.

        \item any element of $V$ but not in $V_s$, that is $V/V_s$, with a path to at least one path to any element of the SCC also has an infinite admissible path passing through it.

        \item the union of the elements in (i) and (ii), denoted as $I^+(G)$, is a closed neighborhood of the largest control invariant set $R_X$ contained in $X$, that is
        \begin{equation*}
            R_X \subseteq I^+(G)
        \end{equation*}
    
    \end{enumerate}
 \end{proposition}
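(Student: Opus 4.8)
The plan is to treat (i) and (ii) as purely graph-theoretic statements about admissible paths, and to reserve the real work for (iii), where the control-invariance hypothesis must be translated into the language of the symbolic image $G$.

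For (i), I would argue that a strongly connected component carrying at least one edge contains a cycle through each of its vertices: if $V_s$ has two or more vertices, Definition~\ref{def:strongly_connected_graph} supplies an admissible path from any $B \in V_s$ back to itself, while if $V_s$ reduces to a single vertex it must carry a self-loop in order to be retained by the algorithm. Traversing such a cycle repeatedly produces an infinite admissible path through every $B \in V_s$. For (ii), I would simply concatenate paths: given $B \in V/V_s$ with a finite admissible path to some $B' \in V_s$, I follow that path to $B'$ and then append the infinite cyclic path of part (i), obtaining an infinite admissible path that passes through $B$.

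The heart of the argument is (iii). First I would fix an arbitrary $x \in R_X$ and invoke Definition~\ref{def:control_invariant_set} to obtain a feedback law $\mu$ under which the closed-loop trajectory $x_0 = x$, $x_{k+1} = f(x_k, \mu(x_k))$ stays in $R_X \subseteq X$ for all $k \geq 0$. Next I would lift this trajectory to a walk in $G$. Since $R$ is an outer covering of $R_X$, each $x_k$ lies in some cell $B_{i_k} \in \mathcal{C}$; and because $x_{k+1} = f(x_k, \mu(x_k)) \in f(x_k, U) = F(x_k) \subseteq F(B_{i_k})$, the point $x_{k+1}$ witnesses $F(B_{i_k}) \cap B_{i_{k+1}} \neq \emptyset$, so $(B_{i_k}, B_{i_{k+1}}) \in E$ by \eqref{eqn:edges}. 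Hence $(B_{i_0}, B_{i_1}, B_{i_2}, \dots)$ is an infinite admissible path in $G$ starting at the cell that contains $x$.

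Finally I would exploit the finiteness of $V$. Because $\mathcal{C}$ is finite, the infinite walk must visit some vertex infinitely often, and any two such recurrent vertices are mutually reachable along the walk, so the recurrent set is strongly connected and therefore contained in an SCC. Thus $B_{i_0}$ either belongs to $V_s$ or admits a finite admissible path into $V_s$, whence $B_{i_0} \subseteq I^+(G)$ by the constructions of parts (i)--(ii). Since $x \in B_{i_0}$ was an arbitrary point of $R_X$, taking the union over all such cells gives $R_X \subseteq I^+(G)$. I expect the recurrence step to be the main obstacle: one must argue carefully that the infinitely-often-visited vertices form a single strongly connected set, and must also dispose of the degenerate single-vertex SCC so that claim (i) does not fail for an isolated cell.
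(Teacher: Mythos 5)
The paper itself offers no proof of this proposition: it is imported verbatim from the authors' earlier work \cite{decardi2021}, so there is no in-paper argument to compare yours against. Judged on its own merits, your proof is correct and follows the standard symbolic-image argument used in that reference: parts (i)--(ii) are the routine cycle-traversal and path-concatenation facts, and part (iii) is proved by lifting the closed-loop trajectory guaranteed by Definition~\ref{def:control_invariant_set} to an admissible walk in $G$ via the edge rule \eqref{eqn:edges} (the inclusion $F(x_k)\subseteq F(B_{i_k})$ and the witness $x_{k+1}\in F(B_{i_k})\cap B_{i_{k+1}}$ is exactly the right step), then using finiteness of $\mathcal{C}$ to extract a recurrent cell and conclude that the starting cell lies in $I^+(G)$. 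You also correctly flagged and resolved the only real subtlety: a single-vertex SCC must carry a self-loop for claim (i) to hold, i.e.\ $V_s$ must be read as the vertices lying on cycles, which is the convention implicit in the GIS algorithm. One cosmetic remark: the proposition calls $I^+(G)$ a \emph{closed} neighborhood of $R_X$; your argument establishes the inclusion $R_X\subseteq I^+(G)$, and closedness follows simply because $I^+(G)$ is a finite union of closed cells, which is worth one sentence if you want the statement in full.
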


Furthermore, it can be shown that $R$ converges to the largest control invariant set $R_X$ as $d \rightarrow 0$ \cite{decardi2021}.

\section{System decomposition and set invariance} \label{sec:system_decomposition}

In trying to alleviate the exponential cell growth in the GIS algorithm, we propose to decompose the overall system into smaller subsystems. This makes it computationally tractable for control invariant set approximation using the GIS algorithm. System decomposition is a typical way of addressing the computational challenges associated with control and state estimation of large scale dynamical systems. This is also the case for control invariant set calculation. The goal is to divide the original system into many small subsystems thus making the computations tractable. Usually, the decomposition is achieved by dividing the system in such a way that the subsystems have weak to no coupling or interconnection. 

In this section, we investigate the system structures suitable for decomposition and control invariant set computation. In particular, we consider simple parallel and series/cascade system structures. Throughout this section, we restrict the discussion to two and three dimensional system structures namely, series and parallel system structures to make the ideas presented here easier to follow. The ideas presented here can easily be generalized to much higher dimensional systems. 

We begin this section by analyzing two types of system structures, their decomposition and the ability to reconstruct the solution from the solution of the subsystems. Thereafter we described the method of overlapping decomposition, which is a precursor for the proposed distributed algorithm.

\subsection{System structures and invariance}

Consider the disjoint system structure 

\begin{equation}\label{eqn:parallel_system}
         x^+ = \hat{f}(x) = 
         \begin{bmatrix}
         \hat{f}_1(x_1)  \\
         \hat{f}_2(x_2)  
         \end{bmatrix} 
         =
         \begin{bmatrix}
         A_{11} & 0 \\
         0 & A_{22}
         \end{bmatrix} 
         x
\end{equation}
where $x = [x_1~x_2]^T$ subject to the state constraint $X$. A graphical depiction of the system is shown in Figure~\ref{fig:parallel_system_structure}. 
\begin{figure}[tbp] 
    \center{\includegraphics[width=0.08\columnwidth]{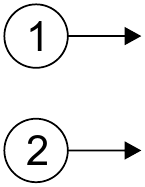}}
    \caption{\label{fig:parallel_system_structure} Parallel or independent system structure}
\end{figure}
Since the evolution of the system states are independent of each other, the system equations can be intuitively decomposed into the following subsystems
\begin{equation}\label{eqn:parallel_decomposition}
    S_1: x_1^+ = \hat{f}_1(x_1) = A_{11} x_1 \text{ and } S_2: x_2^+ = \hat{f}_2(x_2) = A_{22} x_2 
\end{equation}
with subspaces $X_1$ = proj$_1$($X$) and $X_2$ = proj$_2$($X$) respectively. To construct a graph representation of the dynamics of each subsystem, the subspaces $X_1$ and $X_2$ need to be quantized into $\mathcal{C}_1$ and $\mathcal{C}_2$. In what follows, we show how the control invariant set can be computed for the full system from the subsystem information. 
\begin{definition}[Graph Cartesian product \cite{TROTTER1978}] \label{defn:graph_cartesian_product}
    The Cartesian product of the directed graphs $G_1 = (V_1,E_1)$ and $G_2 = (V_2,E_2)$, denoted $G_1 \times G_2$, is a graph $G=(V,E)$ such that $V = V_1 \times V_2$ and for any two points $u=(u_1,u_2)$ and $v=(v_1,v_2)$ in $V$, the directed edge $(u,v) \in E$ whenever $u_1 = v_1$ and $(u_2,v_2) \in E_2$, or $u_2 = v_2$ and $(u_1,v_1) \in E_1$.
\end{definition}
An illustration of the Cartesian product of two fictitious digraphs $G_1$ and $G_2$ is presented in Figure~\ref{fig:graph_cartesian_product}.

\begin{figure}[tbp] 
    \center{\includegraphics[width=0.4\columnwidth]{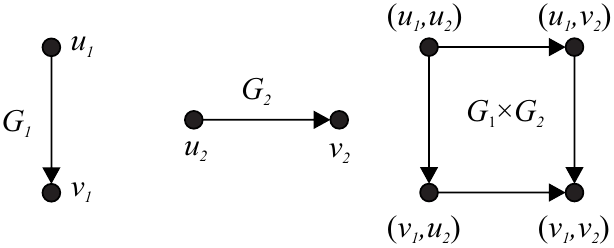}}
    \caption{\label{fig:graph_cartesian_product} The Cartesian product of two graphs}
\end{figure}

\begin{assumption}[Subsystem quantization] \label{asp:constraint_quantization}
    
The set $\mathcal{C}$, which is a quantization of $X$, is the cross product of $\mathcal{C}_1$ and $\mathcal{C}_2$, that is $\mathcal{C} = \mathcal{C}_1 \times \mathcal{C}_2$.
\end{assumption}
\begin{proposition}\label{prop:parallel_cartesian_graph_prop}

Consider the system described by Equation~\eqref{eqn:parallel_system} and decomposed in the form of Equation~\eqref{eqn:parallel_decomposition}. Let $G_1=(V_1,E_1)$ and $G_2=(V_2,E_2)$ be a directed graph representation of the dynamics of each subsystem $S_1$ and $S_2$ respectively based on the sets $\mathcal{C}_1$ and $\mathcal{C}_2$. Also, let $G=(V,E)$ be the directed graph representation of system~\eqref{eqn:parallel_system} based on $\mathcal{C}$. If Assumption~\ref{asp:constraint_quantization} holds, then the full system solution can be exactly obtained from the subsystem solutions, that is,
\begin{equation*}
    I^+(G) = I^+(G_1 \times G_2)
\end{equation*}

\end{proposition}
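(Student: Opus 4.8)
The plan is to reduce the claimed set identity to a purely combinatorial statement about infinite admissible paths in the three graphs $G_1$, $G_2$, and $G$, and then to exploit the block-diagonal structure of \eqref{eqn:parallel_system} together with Assumption~\ref{asp:constraint_quantization}. First I would make the cell bookkeeping explicit: under Assumption~\ref{asp:constraint_quantization} every vertex of $G$ is a product cell $B_i^1 \times B_k^2$ with $B_i^1 \in \mathcal C_1$ and $B_k^2 \in \mathcal C_2$. Because the map is block diagonal, the image of a product cell factorizes, $\hat f(B_i^1 \times B_k^2) = \hat f_1(B_i^1) \times \hat f_2(B_k^2)$, so the intersection test defining $E$ in \eqref{eqn:edges} separates: $F(B_i^1 \times B_k^2) \cap (B_j^1 \times B_l^2) \neq \emptyset$ holds \emph{iff} $\hat f_1(B_i^1) \cap B_j^1 \neq \emptyset$ and $\hat f_2(B_k^2) \cap B_l^2 \neq \emptyset$, i.e. iff $(B_i^1,B_j^1) \in E_1$ and $(B_k^2,B_l^2) \in E_2$. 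This pins down the precise graph-theoretic object I then have to analyze.

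Next I would characterize the non-leaving cells through Proposition~\ref{prop:non-leaving_cells}, i.e. as the vertices lying in, or having a path into, a recurrent (cycle-containing) strongly connected component. The heart of the argument is a factorization lemma for infinite admissible paths: a product vertex $(B_i^1,B_k^2)$ admits an infinite admissible path in $G$ iff $B_i^1$ admits one in $G_1$ and $B_k^2$ admits one in $G_2$. The forward direction is immediate by projecting a path onto each coordinate. For the converse I would take an infinite path in each subsystem, advance each coordinate until it reaches a recurrent SCC, and then run the two coordinates together; since recurrent components contain cycles, the coordinate that arrives first can keep cycling while the other completes its transient, after which both cycle forever, so the resulting path is infinite and admissible. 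Identifying an explicit recurrent product vertex (a cycle of length $\text{lcm}(p,q)$ formed from cycles of lengths $p$ and $q$) then lets me invoke Proposition~\ref{prop:non-leaving_cells} on $G$ to obtain $I^+(G) = I^+(G_1) \times I^+(G_2)$.

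Finally I would run the identical SCC-and-reachability analysis on the product graph $G_1 \times G_2$ of the statement, using Definition~\ref{defn:graph_cartesian_product}, to express $I^+(G_1 \times G_2)$ through $I^+(G_1)$ and $I^+(G_2)$, and match it against the expression just obtained for $I^+(G)$. \textbf{The main obstacle} I anticipate is exactly this last reconciliation. The edge rule forced on $G$ by the dynamics advances \emph{both} coordinates at every step, whereas Definition~\ref{defn:graph_cartesian_product} advances exactly one coordinate per edge, so the two graphs are not literally isomorphic and I must argue that their \emph{non-leaving} vertex sets nonetheless coincide rather than the graphs themselves. I expect the delicate points to be the treatment of trivial (cycle-free) SCCs, the synchronization of transients of unequal length via a common cycle period, and checking that "holding one coordinate fixed while the other moves" is consistent with the operative edge convention. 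These are precisely the places where the product-structure intuition $R_X = R_{X_1} \times R_{X_2}$ could silently break if a coordinate were permitted to stall on a non-recurrent cell, so the bulk of the care in the write-up will go into making the path-synchronization argument airtight and confirming it survives the passage between the two edge conventions.
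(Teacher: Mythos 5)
Your steps 1 and 2 are correct and, taken together, they prove the \emph{factored} identity $I^+(G)=I^+(G_1)\times I^+(G_2)$: the edge factorization of $E$ under the block-diagonal dynamics is exactly right, the forward direction is projection, and your cycle-synchronization argument is valid because in $G$ both coordinates advance at every step. The proposal founders at step 3, and not for a reason that more care can fix: the identity you would need there, $I^+(G_1\times G_2)=I^+(G_1)\times I^+(G_2)$, is false for the product of Definition~\ref{defn:graph_cartesian_product}. Under that edge rule a coordinate may stall on an \emph{arbitrary} cell, recurrent or not: for every $u_2\in V_2$, each edge $(u_1,v_1)\in E_1$ lifts to the edge $\bigl((u_1,u_2),(v_1,u_2)\bigr)$ of $G_1\times G_2$. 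Consequently, if $u_1$ can reach a cycle $C_1$ of $G_1$, then $(u_1,u_2)$ can reach the cycle $C_1\times\{u_2\}$ of $G_1\times G_2$, so $(u_1,u_2)\in I^+(G_1\times G_2)$ for \emph{every} $u_2\in V_2$. In general one gets $I^+(G_1\times G_2)=\bigl(I^+(G_1)\times V_2\bigr)\cup\bigl(V_1\times I^+(G_2)\bigr)$, which strictly contains $I^+(G_1)\times I^+(G_2)$ as soon as one factor has a non-leaving cell and the other has a leaving cell. This already happens in Example~\ref{exp:disjoint_example}: a cell whose $x_1$-range lies in $[-1,1]$ and whose $x_2$-range lies near $x_2=5$ is non-leaving in $G_1\times G_2$ (freeze $x_2$, cycle $x_1$ forever), yet it cannot be in $I^+(G)$ because the dynamics force $x_2$ to exit $X_2$. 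So the failure mode you flagged --- ``a coordinate permitted to stall on a non-recurrent cell'' --- is not a delicate point to be argued away; it is built into the Cartesian edge convention, and it defeats the reverse inclusion and hence the stated equality.

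For comparison, the paper's own proof takes a shorter route and stops exactly at your obstacle: it shows $V=V_1\times V_2$, simulates each edge of $G$ by the two-edge Cartesian path $(u_1,u_2)\rightarrow(u_1,v_2)\rightarrow(v_1,v_2)$, correctly concludes that infinite admissible paths in $G$ induce infinite admissible paths in $G_1\times G_2$ (i.e.\ $I^+(G)\subseteq I^+(G_1\times G_2)$), and then asserts equality without ever addressing the reverse inclusion --- which, as above, is the direction that fails. Your analysis is therefore sharper than the paper's argument: you located precisely the step at which the two edge conventions (both coordinates move per step in $G$ versus exactly one per step in $G_1\times G_2$) cease to be reconcilable. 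What survives of your proposal is the valuable part: steps 1--2 constitute a complete proof of Theorem~\ref{thm:parallel_reconstruction_thm}, $I^+(G)=I^+(G_1)\times I^+(G_2)$, which is the statement actually used for reconstructing the full-system solution; the equality asserted in the present proposition could only be rescued by replacing the product of Definition~\ref{defn:graph_cartesian_product} with the product whose edges advance \emph{both} coordinates along factor edges simultaneously, in which case the edge factorization shows $G$ coincides with that product graph and the claim becomes immediate.
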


\begin{proof}

To prove the above assertion, we first need to show that the vertices of the two graphs $G$ and $G_1 \times G_2$ are equal. Thereafter, we need to show that if there is an admissible path between any two cells in $G$, then there is an equivalent admissible path between those same two cells in $G_1 \times G_2$. 

First, we show that the vertices of the two graphs $G$ and $G_1 \times G_2$ are equal. It follows directly from Assumption~\ref{asp:constraint_quantization} that $\mathcal{C} = \mathcal{C}_1 \times \mathcal{C}_2$. Therefore from Definition~\ref{defn:graph_cartesian_product}, we have that $V = V_1 \times V_2$. 

Now, we show by that if there is a path between any two cells in $G$, then there exist a path between those same two cells in $G_1 \times G_2$. Let $B_i=(u_1,u_2)$ and $B_j=(v_1,v_2)$ be any two cells in $V$ such that $u_1$ and $v_1$ are in $V_1$, and $u_2$ and $v_2$ are cells in $V_2$.
From the graph construction procedure in Equation~\eqref{eqn:edges}, we have that
\begin{align*}
    E &= \{ (B_i,B_j) \in \mathcal{C} \times \mathcal{C} : \hat{f}(B_i) \cap B_j \neq \emptyset \} \\
    &= \{ (B_i,B_j)  \in \mathcal{C} \times \mathcal{C} : \hat{f}_1(u_1) \cap v_1 \neq \emptyset \land \hat{f}_2(u_2) \cap v_2 \neq \emptyset\} \\
    &= \{ (B_i,B_j)  \in \mathcal{C} \times \mathcal{C} : (u_1,v_1) \in E_1 \land (u_2,v_2) \in E_2 \}
\end{align*}
From the definition of $G_1 \times G_2$ and $E$, we have that the path $(u_1,u_2) \rightarrow (u_1,v_2) \rightarrow (v_1,v_2)$ and $(u_1,u_2) \rightarrow (v_1,u_2) \rightarrow (v_1,v_2)$ exist in $G_1 \times G_2$. This implies that there is an admissible path from $(u_1,u_2)$ to $(v_1,v_2)$ in $G_1 \times G_2$. It then follows that if an infinite admissible path passes through the cell $B_i$ in $G$, then an infinite admissible path also passes through the cell $B_i=(u_1,u_2)$ in $G_1 \times G_2$. Hence we have that
\begin{equation*}
    I^+(G) = I^+(G_1 \times G_2),
\end{equation*}
which completes the proof.
\end{proof}
While we have shown that an approximation of $R_X$ can be obtained for the disjoint system by constructing the digraph $G_1 \times G_2$, in most cases this will require a considerable amount of memory to store the graph for higher dimensional systems. The following theorem shows how an approximation of the largest control invariant set can be exactly obtained from the solutions of the subsystems.

\begin{theorem}\label{thm:parallel_reconstruction_thm}

Consider the system described by Equation~\eqref{eqn:parallel_system} and decomposed in the form of Equation~\eqref{eqn:parallel_decomposition}. Let $G_1=(V_1,E_1)$ and $G_2=(V_2,E_2)$ be a directed graph representation of the dynamics of each subsystem $S_1$ and $S_2$ respectively based on the sets $\mathcal{C}_1$ and $\mathcal{C}_2$. Also, let $G=(V,E)$ be the directed graph representation of system~\eqref{eqn:parallel_system} based on $\mathcal{C}$. If Assumption~\ref{asp:constraint_quantization} holds, then the full system solution can be exactly obtained from the subsystem solutions, that is,
\begin{equation*}
    I^+(G) = I^+(G_1) \times I^+(G_2)
\end{equation*}
\end{theorem}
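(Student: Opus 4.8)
The plan is to prove the set equality $I^+(G)=I^+(G_1)\times I^+(G_2)$ by double inclusion, using as the working description of $I^+$ the one supplied by Proposition~\ref{prop:non-leaving_cells}: a cell belongs to $I^+$ precisely when an infinite admissible forward path issues from it, equivalently when it can reach a nontrivial strongly connected component. Two structural facts set the stage. First, by Assumption~\ref{asp:constraint_quantization} and Definition~\ref{defn:graph_cartesian_product} the vertex set factorises, $V=V_1\times V_2$, so every cell of $G$ is a pair $(v_1,v_2)$ with $v_1\in V_1$ and $v_2\in V_2$. Second, the edge computation carried out in the proof of Proposition~\ref{prop:parallel_cartesian_graph_prop} shows that a directed edge $(u_1,u_2)\to(v_1,v_2)$ lies in $E$ if and only if $(u_1,v_1)\in E_1$ and $(u_2,v_2)\in E_2$, so a single step in $G$ advances \emph{both} coordinates at once. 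The entire argument rests on transporting infinite paths across this factorisation.

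For the inclusion $I^+(G)\subseteq I^+(G_1)\times I^+(G_2)$, I would take a cell $(v_1,v_2)\in I^+(G)$ together with an infinite admissible path $(v_1,v_2)=(a_0,b_0)\to(a_1,b_1)\to\cdots$ in $G$. Applying the edge characterisation to each step gives $(a_k,a_{k+1})\in E_1$ and $(b_k,b_{k+1})\in E_2$ for every $k$, so the coordinate projections $a_0\to a_1\to\cdots$ and $b_0\to b_1\to\cdots$ are themselves infinite admissible paths in $G_1$ and $G_2$. Hence $v_1\in I^+(G_1)$ and $v_2\in I^+(G_2)$. For the reverse inclusion I would start from $v_1\in I^+(G_1)$ and $v_2\in I^+(G_2)$, fix infinite admissible paths $v_1=a_0\to a_1\to\cdots$ in $G_1$ and $v_2=b_0\to b_1\to\cdots$ in $G_2$, and zip them index by index into the sequence $(a_0,b_0)\to(a_1,b_1)\to\cdots$. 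Since $(a_k,a_{k+1})\in E_1$ and $(b_k,b_{k+1})\in E_2$ hold simultaneously, each consecutive pair is an edge of $G$ by the edge characterisation, so this is an infinite admissible path in $G$ and $(v_1,v_2)\in I^+(G)$. Combining the two inclusions yields the claimed identity.

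The delicate step is the reverse inclusion, and the point that must be argued carefully is the synchronisation: I need a single path in $G$ whose two projections are both infinite, and this is possible exactly because every edge of $G$ advances the two coordinates together, i.e.\ because $E$ encodes the conjunction $(u_1,v_1)\in E_1 \wedge (u_2,v_2)\in E_2$. This is also what forces the \emph{product} (rather than a union) structure of $I^+(G)$: survival of the full trajectory requires each subsystem to admit an infinite trajectory in lock-step, so both $v_1\in I^+(G_1)$ and $v_2\in I^+(G_2)$ are genuinely needed. A minor bookkeeping point I would keep in view is that $I^+$ also contains transient cells that merely reach a strongly connected component rather than lying in one; the infinite-forward-path formulation of Proposition~\ref{prop:non-leaving_cells} handles these on the same footing as recurrent cells, so no separate treatment of the strongly connected components and their in-flow vertices is required.
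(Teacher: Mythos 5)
Your proof is correct, and it takes a genuinely different route from the paper's. The paper argues in two stages: it first invokes Proposition~\ref{prop:parallel_cartesian_graph_prop} to replace $I^+(G)$ by $I^+(G_1 \times G_2)$, where $G_1 \times G_2$ is the graph Cartesian product of Definition~\ref{defn:graph_cartesian_product} (each edge moves exactly one coordinate while the other is held fixed), and then asserts that an infinite admissible path through $B_i=(u_1,u_2)$ in that product graph yields infinite admissible paths through $u_1$ in $G_1$ and through $u_2$ in $G_2$; the reverse inclusion is left implicit. You instead work directly with $G$ itself, using the edge characterisation derived inside the proof of Proposition~\ref{prop:parallel_cartesian_graph_prop} --- namely that $(u_1,u_2)\to(v_1,v_2)$ is an edge of $G$ if and only if $(u_1,v_1)\in E_1$ \emph{and} $(u_2,v_2)\in E_2$, so $G$ is the conjunctive (tensor) product rather than the Cartesian one --- and you prove both inclusions explicitly by projecting and by zipping infinite paths. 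This buys you two things. First, the inclusion $I^+(G_1)\times I^+(G_2)\subseteq I^+(G)$ gets an actual construction (the lock-step zipping), which the paper never spells out. Second, and more substantively, your route avoids a soundness issue in the detour through the Cartesian product: in $G_1\times G_2$ an infinite path may advance only one coordinate forever (for instance when one factor has a self-loop), and its projection onto the other factor is then eventually constant and need not be an infinite admissible path there; so the projection step the paper applies to $G_1\times G_2$ is only valid for the conjunctive edges of $G$, and Proposition~\ref{prop:parallel_cartesian_graph_prop} itself holds in general only as the inclusion $I^+(G)\subseteq I^+(G_1\times G_2)$. Your argument establishes the theorem without relying on that proposition at all, making it both more elementary and more robust; your closing remarks about synchronisation and about transient cells being covered by the infinite-forward-path formulation are exactly the right points to flag.
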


\begin{proof}

We know from Proposition~\ref{prop:parallel_cartesian_graph_prop} that $I^+(G) = I^+(G_1 \times G_2)$. Let $B_i=(u_1,u_2) \in V$ where $u_1 \in V_1$ and $u_2 \in V_2$. It can be inferred from the definition of $G_1 \times G_2$ that wherever there is an infinite admissible path passing through $B_i$, there is also infinite admissible paths passing through $u_1$ in $G_1$ and $u_2$ in $G_2$. This implies that
\begin{equation*}
    I^+(G) = I^+(G_1) \times I^+(G_2) \qedhere
\end{equation*}
\end{proof}
Theorem~\ref{thm:parallel_reconstruction_thm} shows that the graph representation of the subsystems can be analyzed independently. Then the cells that approximate the largest control invariant set for full system can be computed by finding the Cartesian product of the individual subsystem solutions. This is computationally more efficient since a large graph is not constructed for analysis. To illustrate the ideas presented so far, let us consider Example~\ref{exp:disjoint_example}.

\begin{example} \label{exp:disjoint_example}
Consider the system
\begin{equation}\label{eqn:parallel_projection_example}
         x^+ =
         \begin{bmatrix}
         2 & 0 \\ 
         0 & 2 
         \end{bmatrix} 
         x + 
         \begin{bmatrix} 1 & 0 \\ 0 & 1\end{bmatrix} u
\end{equation}
where the constraints on the states and input are $X = \{ x \in \mathbb{R}^2 : \| x \|_{\infty} \leq 5 \} $ and $U = \{ u \in \mathbb{R}^2 : \| u \|_{\infty} \leq 1 \}$ respectively. 
\end{example}

\begin{figure}[tbp] 
    \center{\includegraphics[width=0.9\columnwidth]{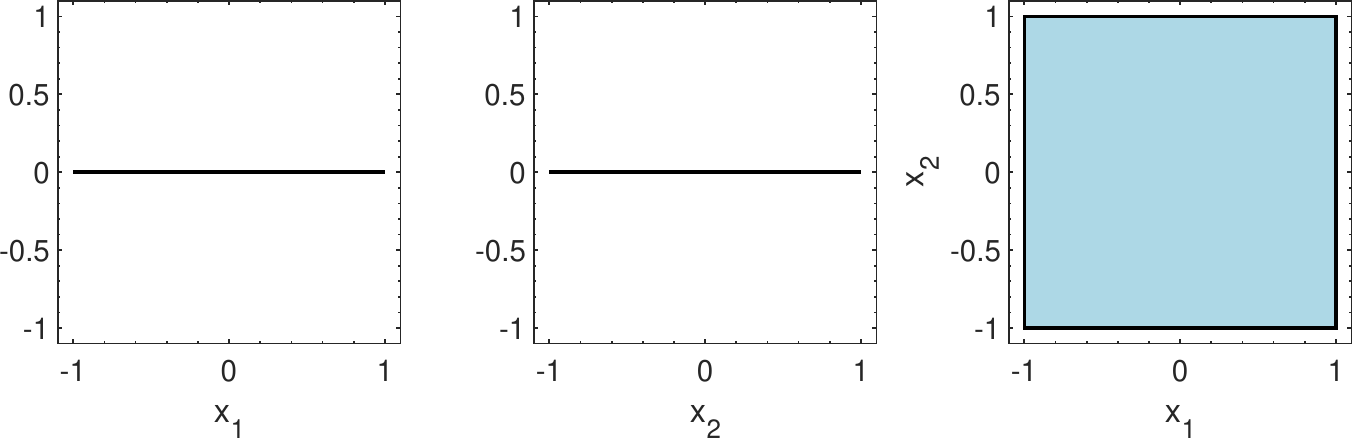}}
    \caption{\label{fig:disjoint_example} Decomposition and full solution reconstruction for system~(\ref{eqn:parallel_projection_example}). Left: Solution of Subsystem~1. Middle: Solution of Subsystem~2. Right: The reconstructed full system solution from the subsystem solutions (black line) and the actual solution of the full system (light blue region).}
\end{figure}

The system in Example~\ref{exp:disjoint_example} can be decomposed into two subsystems in a form similar to Equation~\eqref{eqn:parallel_decomposition}. Figure~\ref{fig:disjoint_example} shows the largest control invariant sets $R_{X_1}$ and $R_{X_2}$ for each of the subsystems, the largest control invariant set $R_X$ for the full system and the reconstructed solution from the control invariant sets of the subsystems. It can be seen that the full system solution can be exactly obtained from the subsystem solution without incurring any error due to the decomposition. This agrees well with our earlier arguments in Theorem~\ref{thm:parallel_reconstruction_thm}, that $R_X = R_{X_1} \times R_{X_2}$. Ultimately, this shows that we can freely move between the control invariant sets of the lower dimension subsystems and the control invariant set for the full system without incurring an losses. Geometrically, the structure of the control invariant set is always going to be a box.

While an approximation of the largest control invariant set can be obtained from the solutions of the subsystems without incurring any errors due to decomposition for disjoint subsystems, such systems rarely occur in the real world. In most cases, the subsystems may be interconnected. Let us consider one of such cases where is coupling between the system states. Consider the system
\begin{equation}\label{eqn:series_structure}
         x^+ = \hat{f}(x)
         =
         \begin{bmatrix}
         A_{11} & 0 \\ 
         A_{21} & A_{22} 
         \end{bmatrix} 
         x
\end{equation}
where $x = [x_1~x_2]^T$ subject to the state constraint $X$. A graphical depiction of the system is shown in Figure~\ref{fig:series_system_structure}. Similar to the disjoint structure in Equation~\eqref{eqn:parallel_system}, system~\eqref{eqn:series_structure} can be decomposed into two subsystems such that
\begin{equation}\label{eqn:series_decomposition}
    S_1: x_1^+ = \hat{f}_1(x_1) = A_{11} x_1 \text{ and } S_2: x_2^+ = \hat{f}_2(x_2) + \hat{g}_2(\tilde{x}_1) = A_{22} x_2 + A_{21}\tilde{x}_1 
\end{equation}
with subspaces $X_1$ = proj$_1$($X$) and $X_2$ = proj$_2$($X$) respectively. Notice that while Subsystem~1 is independent of the dynamics of Subsystem~2, Subsystem~2 is dependent on the dynamics of Subsystem~1. The state information about Subsystem~1 required by Subsystem~2 is denoted by $\tilde{x}_1$ in Equation~\eqref{eqn:series_decomposition}. To work with such a decomposition, the fundamental issues that needs to be addressed in this scenario are how to obtain the values of $\tilde{x}_1$ and how to use this information in Subsystem~2.
\begin{figure}[tbp] 
    \center{\includegraphics[width=0.2\columnwidth]{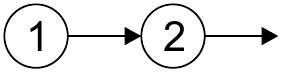}}
    \caption{\label{fig:series_system_structure} Series or connected system structures}
\end{figure}

It is easy to see that the range of values for $\tilde{x}_1$ can be obtained from the solution of Subsystem~1. For the choice of usage of the $\tilde{x}_1$ information, one approach is to use the $\tilde{x}_1$ information as an input in Subsystem~2. However, this more often than not result in an over approximation of the solution of the full system.

\begin{theorem}\label{thm:series_reconstruction_thm}

Consider the system described by Equation~\eqref{eqn:series_structure} and decomposed in the form of Equation~\eqref{eqn:series_decomposition}. Let $G_1=(V_1,E_1)$ and $G_2=(V_2,E_2)$ be a directed graph representation of the dynamics of each subsystem $S_1$ and $S_2$ respectively based on the sets $\mathcal{C}_1$ and $\mathcal{C}_2$. Also, let $G=(V,E)$ be the directed graph representation of system~\eqref{eqn:series_structure} based on $\mathcal{C}$. If Assumption~\ref{asp:constraint_quantization} holds, then 

\begin{equation*}
    I^+(G) \subseteq I^+(G_1) \times I^+(G_2)
\end{equation*}

\end{theorem}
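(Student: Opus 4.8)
The plan is to establish the inclusion by showing that every cell through which an infinite admissible path passes in the full graph $G$ must also project to cells admitting infinite paths in each of the subsystem graphs $G_1$ and $G_2$. The key observation is that the coupling in system~\eqref{eqn:series_structure} is one-directional: Subsystem~1 evolves autonomously via $A_{11}$, while Subsystem~2 receives a feed-forward term $A_{21}\tilde{x}_1$ but does not influence Subsystem~1. This asymmetry is exactly what breaks the equality that held in the disjoint case (Theorem~\ref{thm:parallel_reconstruction_thm}) and leaves us only with an inclusion.

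First I would fix a cell $B_i = (u_1, u_2) \in V$ with $u_1 \in V_1$ and $u_2 \in V_2$, and suppose an infinite admissible path in $G$ passes through $B_i$. I would then project this path onto each subspace. For Subsystem~1, because its dynamics $\hat{f}_1(x_1) = A_{11}x_1$ ignore $x_2$ entirely, the projection of any edge of $G$ onto the first coordinate is consistent with an edge of $G_1$; hence the projected infinite path is an admissible infinite path through $u_1$ in $G_1$, so $u_1 \in I^+(G_1)$. For Subsystem~2, I would argue that since $G$ is built on $\mathcal{C} = \mathcal{C}_1 \times \mathcal{C}_2$ and $\hat{f}_2(x_2) + \hat{g}_2(\tilde{x}_1)$ with $\tilde{x}_1$ ranging over $X_1$ is a \emph{superset} of the image used by the full system at any given $u_1$, every edge $(u_2, v_2)$ induced by an edge of $G$ is also an edge of $G_2$ (the subsystem graph treats $\tilde{x}_1$ as a free input drawn from the whole projection, so it admits at least as many transitions). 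Consequently the second-coordinate projection of the infinite path yields an infinite admissible path through $u_2$ in $G_2$, giving $u_2 \in I^+(G_2)$.

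Combining the two projections, $B_i = (u_1,u_2) \in I^+(G_1) \times I^+(G_2)$, which establishes the claimed inclusion $I^+(G) \subseteq I^+(G_1) \times I^+(G_2)$. To close the argument I would emphasize why the reverse inclusion fails: a pair $(u_1,u_2)$ may admit infinite paths in $G_1$ and $G_2$ separately, yet no single trajectory of the coupled system can realize the specific $\tilde{x}_1$ values along the Subsystem~1 path while simultaneously keeping Subsystem~2 invariant, because in $G$ the coupling term is pinned to the actual upstream state rather than ranging freely over $X_1$.

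The main obstacle I anticipate is making precise the edge-containment claim for Subsystem~2, namely that every transition appearing in the full graph $G$ also appears in $G_2$. This requires carefully relating the edge set of $G$ — where the coupling $\hat{g}_2$ is evaluated at the genuine upstream cell — to the edge set of $G_2$, where $\tilde{x}_1$ is treated as an exogenous input covering all of $X_1 = \text{proj}_1(X)$. I would handle this by showing $\hat{f}_2(u_2) + \hat{g}_2(u_1) \subseteq \hat{f}_2(u_2) + \hat{g}_2(X_1)$ at the cell level, so that $F(B_i) \cap B_j \neq \emptyset$ in $G$ forces the corresponding second-coordinate intersection to be nonempty in $G_2$; this monotonicity of the reachable set under enlarging the coupling input is the crux that turns the equality of Theorem~\ref{thm:parallel_reconstruction_thm} into a one-sided inclusion here.
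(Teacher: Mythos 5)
Your proof is correct, and it reaches the result by a more direct route than the paper. The paper's own proof first invokes Proposition~\ref{prop:parallel_cartesian_graph_prop} to assert $I^+(G) = I^+(G_1 \times G_2)$ and then projects infinite paths of the product graph onto its two factors. You skip the intermediate Cartesian-product graph entirely: you project infinite admissible paths of $G$ itself onto the two coordinates, and justify that each projected edge survives --- trivially for $G_1$ (whose dynamics ignore $x_2$), and for $G_2$ via the monotonicity observation $\hat{f}_2(u_2) + \hat{g}_2(u_1) \subseteq \hat{f}_2(u_2) + \hat{g}_2(X_1)$, i.e., the subsystem graph that treats $\tilde{x}_1$ as a free input ranging over all of $X_1$ contains every edge induced by the coupled dynamics. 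This difference is not cosmetic: Proposition~\ref{prop:parallel_cartesian_graph_prop} was stated and proved for the disjoint system \eqref{eqn:parallel_system}, where the edge set of $G$ factors exactly; for the coupled series system \eqref{eqn:series_structure} only a one-sided containment of edges holds, so the equality the paper quotes is not justified as written (only $I^+(G) \subseteq I^+(G_1 \times G_2)$ survives, which is all that is actually needed). Your edge-containment step supplies precisely that missing justification, so your argument is, if anything, more rigorous and self-contained; what the paper's route buys is brevity and reuse of earlier machinery, while yours makes transparent exactly where the coupling turns the equality of Theorem~\ref{thm:parallel_reconstruction_thm} into a one-sided inclusion, consistent with the counterexample the paper gives afterward (Example~\ref{exp:connected_example}).
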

\begin{proof}

 This proof follows along the same lines as Theorem~\ref{thm:parallel_reconstruction_thm}. We know from Proposition~\ref{prop:parallel_cartesian_graph_prop} that $I^+(G) = I^+(G_1 \times G_2)$. Let $B_i=(u_1,u_2) \in V$ where $u_1 \in V_1$ and $u_2 \in V_2$. It can be inferred from the definition of $G_1 \times G_2$ that wherever there is an infinite admissible path passing through $B_i$, there is also infinite admissible paths passing through $u_1$ in $G_1$ and $u_2$ in $G_2$. This implies that
\begin{equation*}
    I^+(G) \subseteq I^+(G_1) \times I^+(G_2) \qedhere
\end{equation*}

\end{proof}
The reverse of Theorem~\ref{thm:series_reconstruction_thm}, that is $I^+(G_1) \times I^+(G_2) \subseteq I^+(G)$ is necessarily not true. To demonstrate this, let us consider the following example.

\begin{example}\label{exp:connected_example}
Consider the system
 \begin{equation}\label{eqn:series_projection_example}
          x^+ =
          \begin{bmatrix}
          2 & 0 \\ 
          1 & 2 
          \end{bmatrix} 
          x + 
          \begin{bmatrix} 1 & 0 \\ 0 & 1\end{bmatrix} u
 \end{equation}
where the constraints on the states and input are $X = \{ x \in \mathbb{R}^2 : \| x \|_{\infty} \leq 5 \} $ and $U = \{ u \in \mathbb{R}^2 : \| u \|_{\infty} \leq 1 \}$ respectively. 
\end{example} 

As an illustration of our arguments so far, consider the system in Example~\ref{exp:connected_example}. System~\eqref{eqn:series_projection_example} can be decomposed into two subsystems following Equation~\eqref{eqn:series_decomposition}. Figure~\ref{fig:connected_example} shows the solution of the subsystems $R_{X_1}$ and $R_{X_2}$, when $\tilde{x}_1$ is treated as an input in Subsystem~2, the reconstructed solution of the full system from the solution of the subsystems, and the solution of the full system $R_X$. It can be seen that $R_{X} \subseteq R_{X_1} \times R_{X_2}$. Hence, there exists some states in $R_{X_1} \times R_{X_2}$ which are not in $R_X$. 

\begin{figure}[tbp] 
    \center{\includegraphics[width=0.9\columnwidth]{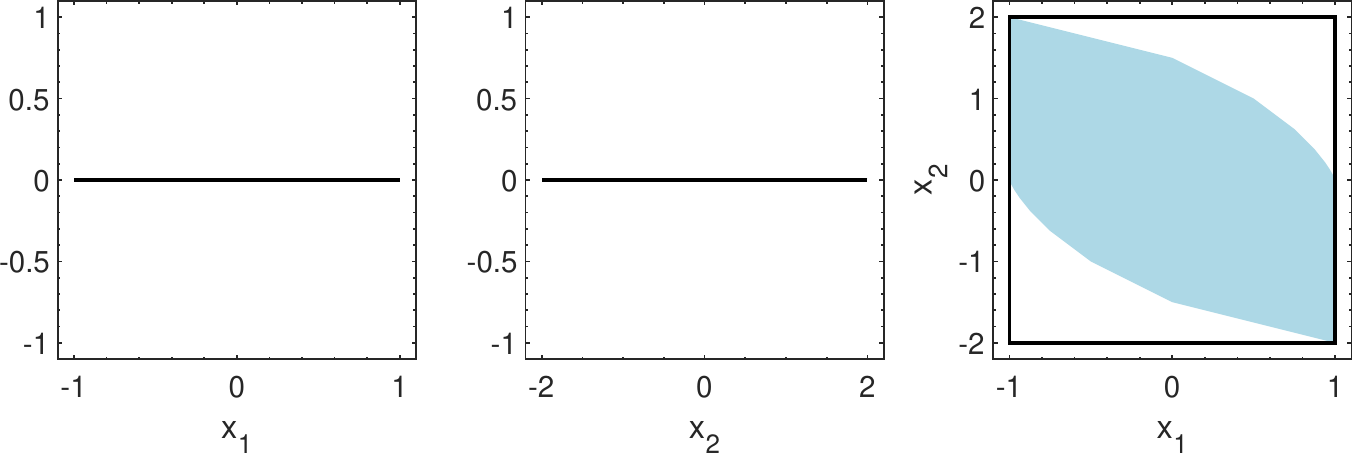}}
    \caption{\label{fig:connected_example} Decomposition and full solution reconstruction for system~(\ref{eqn:series_projection_example}). Left: Solution of Subsystem~1. Middle: Solution of Subsystem~2 when $\tilde{x}_1$ is treated as an input. Right: The reconstructed full system solution from the subsystem solutions (black line) and the actual solution of the full system (light blue region).}
\end{figure}

To address the issue of over approximation, several studies have proposed to treat $\tilde{x}_1$ as an uncertain input in Subsystem~2 \cite{li2020,KAYNAMA2013}. Let us introduce the following definition
\begin{definition}[Robust control invariant set]
A set $R$ is said to be a robust control invariant set of the system $x^+ = f(x,u,w)$, if for every $x \in R$, there exist a feedback control law $u = \mu(x) \in U$ such that $R$ is forward invariant for the closed-loop system $f(x,u,w)$ for all uncertainty $w$ in the uncertainty set $W$.
\end{definition}

\noindent By treating $\tilde{x}_1$ as an uncertain input, a worse case scenario analysis can be considered during the computation of the solution of Subsystem~2. This results in a robust control invariant set solution for Subsystem~2. While this approach works, there are several downsides to it. First, it is neither trivial nor easy to compute robust control invariant sets. In fact, more resources are needed to compute robust control invariant sets than to compute control invariant sets. Second, the resulting reconstructed set is often conservative or small compared to the control invariant set for the full system. In some instances, the computation may end up with an empty set for Subsystem~2 which is undesired. Let us consider the system in Example~\ref{exp:connected_example} again. When $\tilde{x}_1$ is treated as an uncertainty in its worst case, $R_{X_2} = \emptyset$, implying that $R_{X_2} \times R_{X_2} = \emptyset$, which is not desirable. This also shows that $R_{X_1} \times R_{X_2} \subseteq R_{X}$ when the $\tilde{x}_1$ is treated as an uncertainty, which is a conservative result. 

In a nutshell, it can be seen that when the subsystems are connected, the full system solution cannot be easily obtained from the solution of the subsystems. Further scrutinizing the actual control invariant set for system~\eqref{eqn:series_projection_example} in Figure~\ref{fig:connected_example}, it can be seen that $\tilde{x}_1$ actually behaves as an input at some points in $x_2$ and as an uncertainty in other $x_2$ locations. This is the central idea we upon which we develop our distributed algorithm.

\subsection{Overlapping system decomposition}


In the preceding section, we have discussed the implication of decomposition of set invariance using 2 dimensional systems. But what how do we decompose connected systems when the system dimension is greater than 2? In this section, we briefly discuss how to decompose three dimensional cascade systems.

Let us consider the following system
\begin{equation}\label{eqn:series_structure_3d}
         x^+ = \hat{f}(x)
         =
         \begin{bmatrix}
         A_{11} & 0 & 0 \\
         A_{21} & A_{22} & 0 \\
         0 & A_{32} & A_{33} \\
         \end{bmatrix} 
         x
\end{equation}
where $x = [x_1~x_2~x_3]^T$ subject to the state constraint $X$. 
\begin{figure}[tbp] 
    \center{\includegraphics[width=0.8\columnwidth]{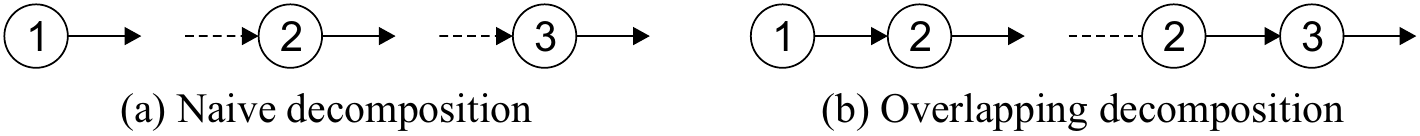}}
    \caption{\label{fig:series_decomposition} Different decomposition strategies for the cascade system}
\end{figure}
There are different ways to decompose system~(\ref{eqn:series_structure_3d}) into several subsystems. A simple but naive decomposition is to consider each state in Equation~(\ref{eqn:series_structure_3d}) separately as shown in Figure~\ref{fig:series_decomposition}. 
This way, three subsystems will be obtained with each subsystem focusing on a single state. While this look simple and is similar to the parallel decomposition, it require information about the state in the preceding subsystem. Moreover, the control invariant sets obtained this way based on the subsystems are not very useful in reconstructing the control invariant set of the original system since they do not contain the interconnection information about the neighboring states. The only way to reconstruct the control invariant set from the subsystem solutions is to find the Cartesian product of the resulting control invariant set for the each subsystem. This will ultimately result in a large over approximation of $R_X$.

A more useful way is to decompose system~\eqref{eqn:series_structure_3d} into subsystems such that each subsystem is chained to its neighboring subsystem as shown in Figure~\ref{fig:series_decomposition}. This way, the interactions between the states can be accounted for. As an example, system~(\ref{eqn:series_structure_3d}) can be decomposed into two subsystems that share a common part. This can be considered as equivalent to expanding the system to a higher dimension since the overlapping states are repeated in the state equation as shown in Equation~(\ref{eqn:expanded_system}). The overlapping states play an important role in reconstructing the control invariant set from the distributed computing results. The role of the overlapping states will be made clear in the later discussion. 

More formally, let $z \in \m{R}^{n_1 + n_2}$ denote the states for the expanded system for system~(\ref{eqn:series_structure}) such that $z = [x_1~x_2~x_2~x_3]^T$ is the expanded system state vector with $z_1 = [x_1~x_2]^T \in \m{R}^{n_1}$ and $z_2 = [x_2~x_3]^T  \in \m{R}^{n_2}$ being the subsystem states. The two subsystems are shown in Equation~(\ref{eqn:series_subsystems}). The corresponding decomposition is depicted in Figure~\ref{fig:series_decomposition}. 

\begin{equation} \label{eqn:expanded_system}
     z^+
     =
     \left[
     \begin{array}{cc|cc}
     A_{11} & 0 & 0 & 0 \\
     A_{21} & A_{22} & 0 & 0 \\ \hline
     A_{21} & 0 & A_{22} & 0 \\
     0 & 0 & A_{32} & A_{33} \\
     \end{array} \right]
     z
 \end{equation} 

  \begin{equation} \label{eqn:series_subsystems}
      S_1: z_1^+ = f_1(z_1)
      =
      \begin{bmatrix}
      A_{11} & 0 \\
      A_{21} & A_{22} 
      \end{bmatrix} 
      z_1 \quad 
      S_2: z_2^+ = f_2(z_2,\tilde{x}_1)
      =
      \begin{bmatrix}
      A_{22} & 0 \\
      A_{32} & A_{33} 
      \end{bmatrix} 
      z_2 
      +
      \begin{bmatrix}
      A_{21}  \\
      0 
      \end{bmatrix} 
      \tilde{x}_1 
  \end{equation}

 Notice that while Subsystem 1 is independent of any external state, Subsystem 2 requires information about $x_1$ similar to our earlier discussions. We will refer to the information about $x_1$ required by Subsystem~2 as missing state information, denoted by $\tilde{x}_1$, in the next section. How the missing state information is obtained will be discussed in the subsequent section. As can be seen in the above equations, the decomposed system has a higher dimension than the original system i.e. $n_1 + n_2 > n$. While the above example used a linear system, this type of decomposition can be readily applied to nonlinear systems with exogenous inputs and serves as a precursor to the proposed algorithm. Unfortunately, there is no general way of reconstructing the control invariant set for the overall system from the solution of the subsystems. This is because of the coupling between the two subsystems and that in Subsystem 2, though $x_1$ information is used, the dynamics is not considered. 

\section{Computation of the largest control invariant set via system decomposition} \label{sec:cis_computation}

In this section, we present an algorithm to compute an approximation of $R_X$ via system decomposition. We focus on the cascade structure presented in Section~\ref{sec:system_decomposition}. Given a cascade system with its associated overlapping decomposition, the algorithm computes the control invariant set for each of the subsystems, first in a decentralized manner, followed by a distributed. Thereafter, the control invariant for the overall system is reconstructed and subset of the cells tested for control invariance. The central idea employed here is that the missing state information is treated as both a regular input and an uncertain input. This way, the two aspects of the missing state is utilized. As we will show in this section, the use of graph analysis makes it easier for this to be achieved. Throughout this section, we assume there are only two subsystems $S_1$ and $S_2$, that is $M=2$, as described in Equations~\eqref{eqn:series_subsystems} with subspaces $X_1 = \text{proj}_1 (X)$ and $X_2 = \text{proj}_2 (X)$ respectively. 

We begin this section by first describing procedure for the decentralized and distributed computation. Thereafter, we present the procedure for the reconstruction and validation of the solution. We end this section by conducting a brief analyzes of the complexity of the proposed algorithm. 

\subsection{Decentralized and distributed computation}

The decentralized and the distributed computation makes use of the standard GIS algorithm to determine the cells which approximate the largest control invariant sets ($R_{X_1}$ and $R_{X_2}$) for each of the subsystems. We denote the solutions of the subsystems by $R_1$ and $R_2$ respectively. During the computations, the directed graphs $G_1$ and $G_2$ for each subsystem are constructed based on the collections $\mathcal{C}_{1}$ and $\mathcal{C}_{2}$ of each subsystem respectively. Further analysis on these graphs is vital for the reconstruction of the solution of the overall system.

In the decentralized step, each subsystem is treated as independent of each other. This allows for the computation of $R_1$ and $R_2$ in parallel. The goal is to quickly obtain an approximation of the largest control invariant sets and reduce the search space for further computations. To make this possible, the interval for missing state information for $x_1$ in Subsystem~2 is held constant during the computation. In this case, an approximation which is obtained by projecting the state constraint onto the dimension of $x_1$ may be used, that is
\begin{equation}
    \tilde{x}_1 = \text{proj}_{u} X
\end{equation}
where $u$ denotes the dimension of $x_1$. Once the computation begin, there is no communication between the solutions of the two subsystems. The algorithm for the decentralized step is summarized in Algorithm~\ref{alg:decentralized_computation}. The algorithm takes as input the subsystem equations, the quantized subspaces for each subsystem, the input constraint and the number of subsystems. As mentioned earlier, this is fixed since only two subsystems are considered here. The algorithm returns estimates of the control invariant sets for each of the subsystems as well as their associated digraphs.

\begin{algorithm}

\caption{Decentralized computation} \label{alg:decentralized_computation}

\KwIn{Subsystems in Equation~\eqref{eqn:series_subsystems} and the collections $\mathcal{C}_1$ and $\mathcal{C}_2$, $U$, $M=2$}
\KwOut{$G_i$, $R_i$, $i=1, \cdots, M$   }
$G_1 \leftarrow$ Construct the directed graph representation of $S_1$ \\
$R_1 \leftarrow$ $I^+(G_1)$\\
\For{$i=2 \cdots M$ } { \tcp{Can be done in parallel}
    $\tilde{x}_{i-1} \leftarrow \text{proj}_{i-1}X$ \\
    $G_i \leftarrow$ Construct the directed graph representation of $S_i$ utilizing $\tilde{x}_{i-1}$ \\
    $R_i \leftarrow$ $I^+(G_i)$ \\
}

\textbf{return $R_{i},G_i, (i=1,\cdots,M)$}

\end{algorithm}

In the distributed step, the control invariant set for each subsystem is computed sequentially. Thus, the solution for the immediately preceding subsystem is used to estimate the missing state information for the current subsystem.  While the missing state information for each subsystem is obtained from the state constraint in the decentralized computation, the missing state information is dynamically obtained from the neighboring subsystem solution in the distributed computation. This is the key difference between the decentralized computation and the distributed computation. It is also the reason for computing the solution for each subsystem in a sequential fashion. The procedure for dynamically estimating the missing state information in the computation of the control invariant set of Subsystem~2 is presented in Figure~\ref{fig:missing_state_estimation}. 
\begin{figure}[tbp] 
    \center{\includegraphics[width=0.5\columnwidth]{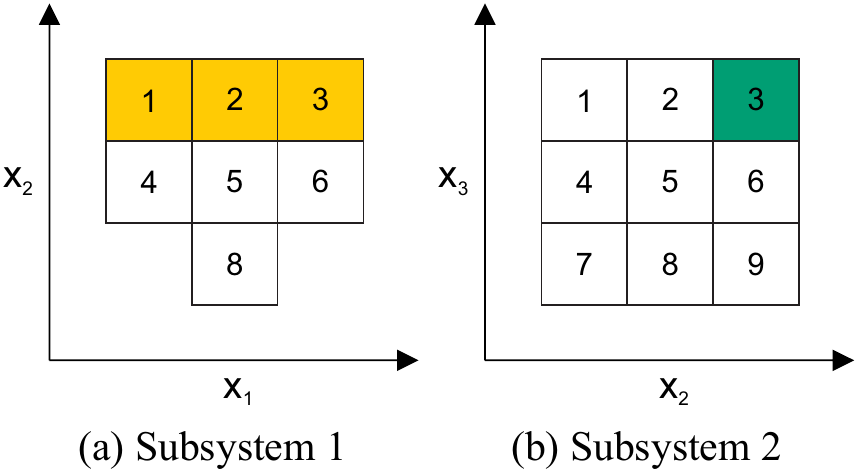}}
    \caption{\label{fig:missing_state_estimation} Procedure for dynamically estimating the missing $x_1$ information for each cell in Subsystem 2. (a) Solution of Subsystem 1 $R_{1}$. The yellow cells indicate the cells that correspond to $B_3$ in the solution of Subsystem 2. Merging and projecting these cells onto the $x_1$ dimension produces the range of $x_1$ for $B_3$. (b) Solution of Subsystem 2 $R_{2}$. The green cell indicate the cell whose missing state information is being estimated. This procedure is repeated for all other cells in $R_{2}$.}
\end{figure}
By allowing unidirectional communication through dynamic estimation of the missing state information, the solution of each subsystem is further refined. Furthermore, a more realistic graph representation of the dynamics of each subsystem is obtained. This will be useful during the set reconstruction step of the algorithm. The algorithm for the distributed computation is presented in Algorithm~\ref{alg:distributed_computation}. It is similar to the algorithm used for the decentralized computation. As mentioned earlier, the key difference is how the missing state information $\tilde{x}_i$ is obtained.

\begin{algorithm}
\caption{Distributed computation} \label{alg:distributed_computation}
\KwIn{Subsystems in Equation~\eqref{eqn:series_subsystems} and the collections $\mathcal{C}_1$ and $\mathcal{C}_2$, $U$, $M=2$}
\KwOut{$G_i$, $R_i$, $i=1, \cdots, M$   }
$G_1 \leftarrow$ Construct the directed graph representation of $S_1$ \\
$R_1 \leftarrow$ $I^+(G_1)$\\
\For{$i=2 \cdots M$}{
    Estimate the range of $\tilde{x}_{i-1}$ for $S_i$ from $R_{i-1}$ \\ 
    $G_i \leftarrow$ Construct the directed graph representation of $S_i$ utilizing $\tilde{x}_{i-1}$ \\
    $R_i \leftarrow$ $I^+(G_i)$ \\
}

\textbf{return $R_{i},G_i, (i=1,\cdots,M)$}

\end{algorithm}

\subsection{Set reconstruction and validation}

Following the distributed computation step, the solutions from the two subsystems, that is $R_1$ and $R_2$, together with the directed graphs $G_1$ and $G_2$ are used to reconstruct the solution for the overall system $R$. Notice that the Cartesian product of the two sets will result in a 4 dimensional object which is not accurate. This is because of the overlapping decomposition. The procedure for reconstructing $R$ therefore involves projecting the cells in $R_1$ onto the $x_1$ dimension in a dynamic fashion similar to the procedure for estimating the missing state information in the distributed computation. However, the range of $x_1$ is not merged in this case since the focus is reconstruction of the set for the overall system. The procedure for reconstructing the set is described in Figure~\ref{fig:set_reconstruction}.
\begin{figure}[tbp] 
    \center{\includegraphics[width=0.6\columnwidth]{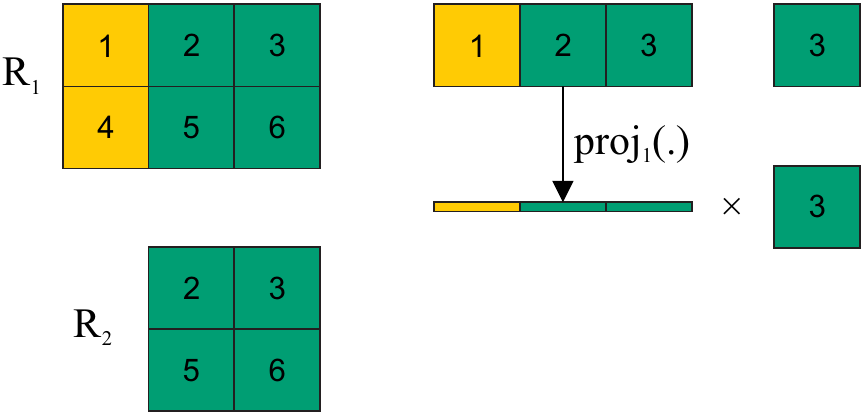}}
    \caption{\label{fig:set_reconstruction} Procedure for reconstructing the control invariant set for the overall system from the subsystem solutions $R_1$ and $R_2$. The procedure is indicated for Cell $B_3$ in $R_2$. Because of the overlap, $B_6$ in $R_2$ has a connection with with $B_1$, $B_2$ and $B_3$ in $R_1$. These cells are then projected onto the $x_1$ dimension to obtain the range of $x_1$ for each cell. The Cartesian product of the $B_3$ with the range of $x_1$ produces 3 dimensional cells (in this case 3 cells). This is repeated for all other cells in $R_2$ to obtain the solution $R$ for the overall system.}
\end{figure}

As described earlier, reconstructing the solution of the overall system from the subsystem solutions may often than not result an approximation error due to the decomposition and coupling between the subsystems. However, compared to the naive decomposition, the approximation error for the overlapping decomposition is expected to be smaller. Nonetheless, the approximation error need to be addressed to ensure that the reconstructed solution from the solution of the subsystems converges to solution when the full system model is used. 

To address the approximation error due to the decomposition, we propose that a subset of the cells forming the reconstructed set needs to be validated for inclusion in the final reconstructed solution. The validation test involves computing the next feasible states of each cell to be tested using the overall system model. Thereafter, the next feasible states of the cell is checked for intersection with the reconstructed set. A cell $B_i$ fails the validation test if its successive states does not intersect the reconstructed set $R$, that is
\begin{equation}
    F(B_i) \cap R = \emptyset
\end{equation}
This is a direct intuition from the definition of control invariance. Any cell that fails the validation test is removed from the reconstructed set.

What now remains is how to select the cells from the reconstructed set $R$ for the validation test. In principle, each cell in the reconstructed set need to be tested. However, this may result in having to check for a large number of cells especially when the system dimension is high. An alternative approach is to further investigate the graph $G_2$ from the distributed computation step to determine the cells that need to be tested. This way, the cells to test will be significantly reduced. The general idea we use to further investigate the graph $G_2$ is to consider the adversarial aspects of the missing state information used to construct $G_2$. We know by definition that the set $R_2$ is made of cells with infinite admissible paths passing through it, that is non-leaving cells. Thus, if there is a direct path from a cell in $R_2$ to a cell not in $R_2$, then that path could have been caused by the missing state $\tilde{x}_1$. By doing this, we consider the contribution of the missing state $\tilde{x}_1$ as a regular input and as an uncertain input.

Given the collection $\mathcal{C}_2$ for Subsystem~2, let $\mathcal{C}^{lc}_2$ denote the leaving cells. It is easy to see that $\mathcal{C}_2 = R_2 \cup \mathcal{C}^{lc}_2$. 
\begin{definition}[Incoming neighbors]
Given the directed graph $G=(V,E)$, the incoming or in neighbors of the vertex $u \in V$ are the nodes $v \in V$ such that the edges $(v,u) \in E$ exist. 
\end{definition}

\noindent The central idea is to find the incoming neighbors of the leaving cells in the graph of Subsystem~2. Then we find those incoming neighbors that have an intersection with the non-leaving cells. Let $\mathcal{C}^{t}_2$ denote the cells that need to be tested for Subsystem~2 and in$(G_2,\mathcal{C}^{lc}_2)$ be the incoming neighbors of the leaving cells on $G_2$. Then the following relationship describes the cells that need to be flagged for further testing after the reconstruction
\begin{equation}\label{eqn:cell_testing}
    \mathcal{C}^{t}_2 = \text{in}(G_2, \mathcal{C}^{lc}_2) \cap R_2
\end{equation}
By obtaining the cells to be tested according to Equation~\eqref{eqn:cell_testing}, we ensure that any cell that has a direct path to any of the leaving cells are verified. The cells to be tested, that is $\mathcal{C}^t$ for the full dimensional problem can be obtained from $\mathcal{C}^t_2$ by following the set reconstruction procedure described earlier for $R_2$.

\begin{figure}[tbp] 
    \center{\includegraphics[width=0.6\columnwidth]{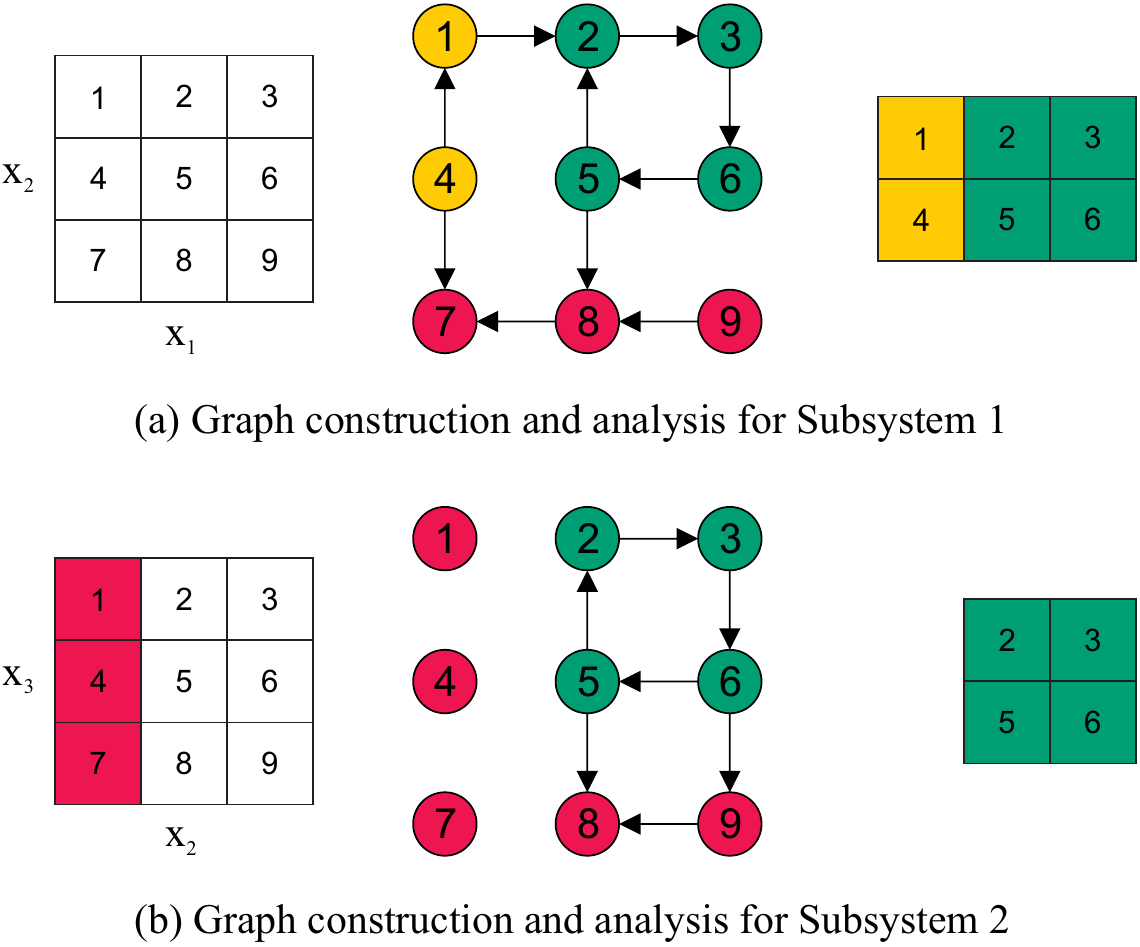}}
    \caption{\label{fig:distributed_graph_analysis} Procedure for the analysis of the distributed graphs to find the cells to be tested.}
\end{figure}

To illustrate how a cell is selected for testing, Figure~\ref{fig:distributed_graph_analysis} is presented. From the figure, it can be seen that the non-leaving cells of Subsystem~2 $R_2$ are $\{ B_2, \cdots, B_6\}$ while the leaving cells $\mathcal{C}^{lc}_2$  are $\{  B_8, B_9\}$. Also, the incoming neighbors of the leaving cells are $\{ B_5, B_6 \}$. Thus, $\mathcal{C}^{t}_{d,2} = \{ B_5, B_6 \}$. 

The algorithm for the set validation is presented in Algorithm~\ref{alg:set_reconstruction}. The algorithm takes as input the full system model, the constraint set, the cells to be tested $\mathcal{C}^t$ and the reconstructed solution $R$. It returns the final validated solution $R^*$. A while loop is used in the validation algorithm since the order to conduct the tests for the cells in the testing set in not known in advanced. Hence, we check several times until no cell is removed.

\begin{remark}
It is worth mentioning that the presence of both finite and infinite path passing through a particular cell may be caused by the effects of the actual system inputs and not the missing state. The effects of this two can be separated. However, in this work we take a conservative approach and treat the presence of both finite and infinite admissible path on a cell as an effect of the missing state.
\end{remark}

\begin{algorithm}

\caption{Set validation} \label{alg:set_reconstruction}

\KwIn{System~\eqref{eqn:system}, constraint sets $X$ and $U$, $\mathcal{C}^t$, $R$ }
\KwOut{$R^*$   }
$cont = true$ \\
\While{cont}{
    $tmp \leftarrow \emptyset$ \\
    \For{$B_i \in \mathcal{C}^t$}{

        \If{$F(B_i) \cap R = \emptyset$}{ 
            Add $B_i$ to $tmp$ \\     
        }

    }

    \If{$tmp \neq \emptyset$}{
        $R \leftarrow R \slash tmp$ \tcp{Set difference} 
        $\mathcal{C}^t \leftarrow \mathcal{C}^t \slash tmp$
    }

}
$R^* \leftarrow R$ \\

\textbf{return $R^*$}

\end{algorithm}

\subsection{Computational complexity}

In this section, we briefly analyze the computational complexity of the proposed algorithm. In particular, we focus on how the complexity of the algorithm increases in the system state dimension $n$. Consider that only two subsystems $S_1, S_2$ with state dimensions $n_1 < n,n_2 < n $ have been created and let $n_c$ denote the number of interval divisions per state dimension. Based on the results of \cite{decardi2021}, we know that the overall complexity of the graph-based control invariant set computation algorithm is determined by the graph construction step which is $\mathcal{O}(n_c^{n})$. This implies that the algorithm increases exponentially in the state dimension $n$.

Owing to the decomposition, the computation of the control invariant sets for the each of the subsystems is $\mathcal{O}(n_c^{n_1})$ and $\mathcal{O}(n_c^{n_2})$ respectively which is less than $\mathcal{O}(n_c^{n})$. Thus, the complexity of the decentralized and the distributed steps is determined by the subsystem with the largest dimension i.e. max($\mathcal{O}(n_c^{n_1}),\mathcal{O}(n_c^{n_2})$). In both of the steps, the missing states are determined and stored in a dictionary prior to the computation and therefore retrieving it has a complexity of $\mathcal{O}(1)$. The centralized step is probably the most time consuming part of the algorithm since the control invariant set is reconstructed from the subsystem solution and validated. However, since only the cells flagged for testing $\mathcal{C}^t$ are tested, three possibilities can occur depending on the number of cells flagged. Let $n_t$ be the number of cells flagged for testing after the distributed computation step. At worst, all the cells are flagged and therefore all the cells in the reconstruction step need to be validated. At the best case, no cell is flagged and therefore the overall control invariant set can be reconstructed without any validation. On the average case, not all the cells are flagged for validation. The complexity in this stage is therefore $\mathcal{O}(n_t)$ which is on average less than or at worst equal to $\mathcal{O}(n_c^{n})$. 

\section{Examples} \label{sec:examples}

In this section we present three numerical examples to demonstrate the efficacy of the proposed algorithm. The first two are 3 dimensional linear and nonlinear examples. This is used to demonstrate that the solution converges to the solution of the full system model for both the linear and nonlinear case. This is because, it is computationally difficult to obtain an approximation of $R_X$ for higher dimensional systems using the standard GIS algorithm. Finally, we compute an outer approximation of $R_X$ for a 6 dimensional nonlinear system. 

Unless otherwise stated, the number of division in each dimension was fixed at 128.The computation was performed on a laptop computer with Intel i7 CPU at 2.60 GHZ and 16 GB RAM. We refer to the solution from the decomposition as the distributed solution and that from the full system model as the centralized solution.

\subsection{Linear system example}

 Consider the linear time-invariant cascade system
 \begin{equation} \label{eqn:linear_example}
   x^+ = Ax + Bu
 \end{equation}
 where $A$ and $B$ are given by 
 \begin{equation*}
   A = \begin{bmatrix} 2 & 0 & 0 \\ 1 & 2 & 0 \\ 0 & 1 & 2 \end{bmatrix}  ~\text{and} ~  B =   \begin{bmatrix} 1 \\ 0 \\ 0 \end{bmatrix}.
 \end{equation*}
 The constraints on the states and input are $X = \{ x \in \mathbb{R}^3 : \| x \|_{\infty} \leq 5 \} $ and $U = \{ u \in \mathbb{R} : \| u \|_{\infty} \leq 1 \}$ respectively. 

 \begin{figure}[tbp] 
     \center{\includegraphics[width=0.7\columnwidth]{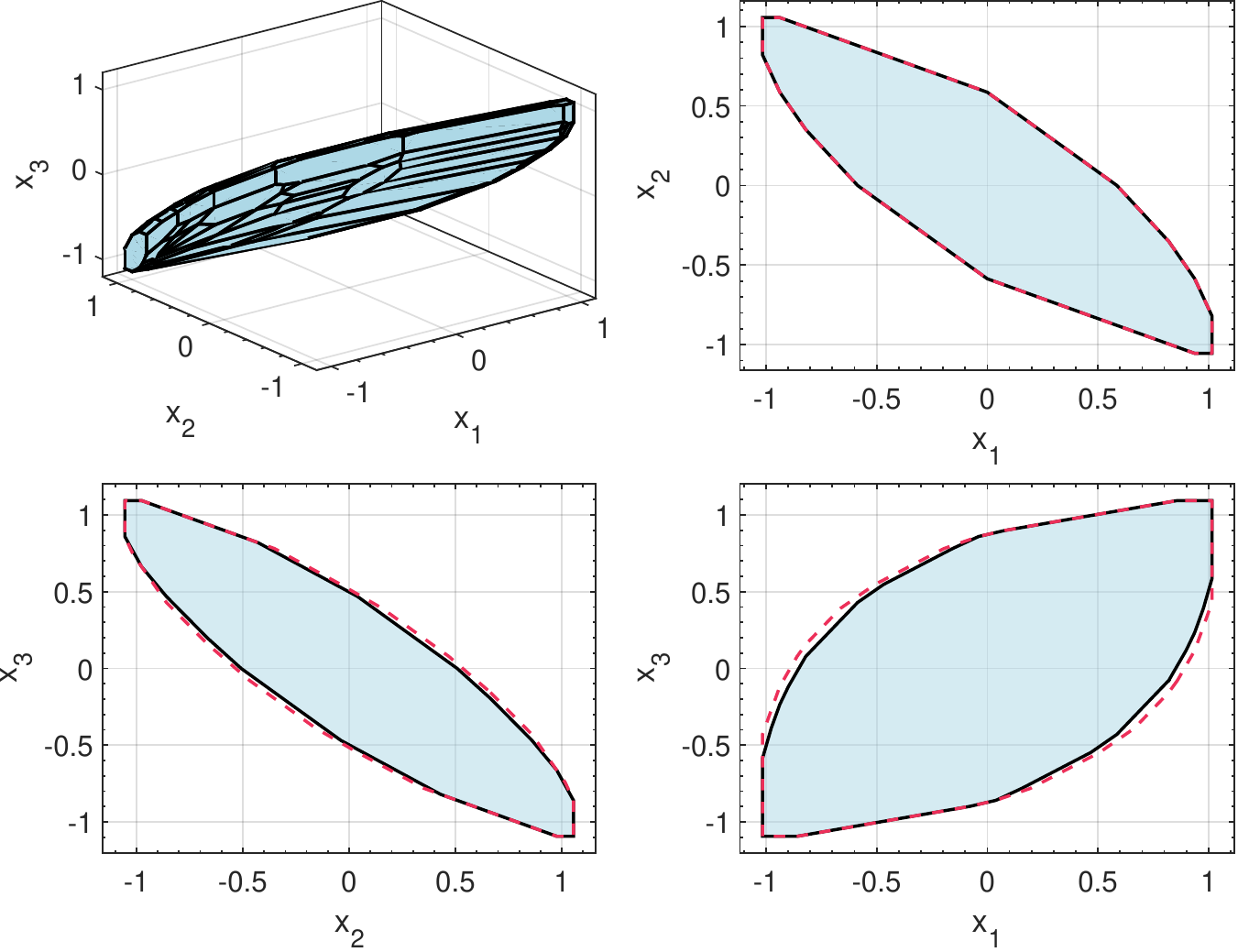}}
     \caption{\label{fig:example_1} Comparison of the convex hull of the cells in the centralized and distributed solutions. Blue shaded area: centralized solution. Dashed red: solution of the sets obtained from the distributed computation. Black: Final solution after reconstructing the solution and validating the cells. }
 \end{figure}
 Our goal is to compute the control invariant set for the system. To use the proposed algorithm, the system is first expanded to a 4-dimensional system by repeating the state equation for the second state as shown in Equation~(\ref{eqn:example1_expansion}). 

 \begin{equation} \label{eqn:example1_expansion}
     z^+
     = f(z,u) = 
     \left[
     \begin{array}{cc|cc}
     2 & 0 & 0 & 0 \\
     1 & 2 & 0 & 0 \\ \hline
     1 & 0 & 2 & 0 \\
     0 & 0 & 1 & 2 \\
     \end{array} \right]
     z + 
     \left[
     \begin{array}{c}
     1 \\
     0  \\ \hline
     0  \\
     0  \\
     \end{array} \right] u
 \end{equation} 
 where $z = [x_1 ~ x_2 ~ x_2 ~ x_3]^T \in \mathbb{R}^{4}$. It can be seen that $n_z = 4 > n_x = 3$. Equation (\ref{eqn:example1_expansion}) is then decomposed to obtain two subsystems namely: subsystem one: $z_1 = (x_1,x_2)$ and subsystem two: $z_2= (x_2,x_3)$ as shown in Equations (\ref{eqn:example1_subsystem1}) and (\ref{eqn:example2_subsystem2}) below.
 \begin{equation} \label{eqn:example1_subsystem1}
     z_1^+ = f_1(z_1, u)
     =
     \begin{bmatrix}
     2 & 0 \\
     1 & 2 
     \end{bmatrix} 
     z_1 + 
     \begin{bmatrix}
     1 \\
     0 
     \end{bmatrix} 
     u
     = f_1(x_1,x_2, u)
 \end{equation}

 \begin{equation}\label{eqn:example2_subsystem2}
     z_2^+ = f_2(z_2,x_1)
     =
     \begin{bmatrix}
     2 & 0 \\
     1 & 2 
     \end{bmatrix} 
     z_2 
     +
     \begin{bmatrix}
     1  \\
     0 
     \end{bmatrix} 
     x_1 +
     \begin{bmatrix}
     0 \\
     0 
     \end{bmatrix} 
     u
     = f_2(x_2,x_3,x_1)
 \end{equation}
 The control invariant sets for the two subsystems and ultimately the full system are computed using the distributed algorithm. The solution from the distributed algorithm is compared to the projections of the control invariant set computed using the centralized model. This is presented in Figure~\ref{fig:example_1}. It can be seen that the distributed computation converges to the centralized computation.  


\subsection{Nonlinear system example}

Consider the nonlinear system
\begin{subequations} \label{eqn:nonlinear_example}
    \begin{align} 
        x_1^+ = x_1^2 + u \\
        x_2^+ = x_2^2 + x_1 \\
        x_3^+ = x_3^2 + x_2
    \end{align}
\end{subequations}
The constraints on the states and input are $X = \{ x \in \mathbb{R}^3 : \| x \|_{\infty} \leq 5 \} $ and $U = \{ u \in \mathbb{R} : \| u \|_{\infty} \leq 1 \}$ respectively. 

Our goal is to compute an approximation of the largest control invariant set for the system. To use the proposed algorithm, the system is first expanded to a 4-dimensional system by repeating the state equation for the second state as shown in Equation (\ref{eqn:example2_expansion}).

\begin{equation} \label{eqn:example2_expansion}
    z^+
    = f(z,u) = 
    \left[
    \begin{array}{cc|cc}
    x_1^2 & 0 & 0 & 0 \\
    x_1 & x_2^2 & 0 & 0 \\ \hline
    x_1 & 0 & x_2^2 & 0 \\
    0 & 0 & x_2 & x_3^2 \\
    \end{array} \right]
     + 
    \left[
    \begin{array}{c}
    1.0 \\
    0  \\ \hline
    0  \\
    0  \\
    \end{array} \right] u
\end{equation} 
where $z = [x_1 ~ x_2 ~ x_2 ~ x_3]^T \in \mathbb{R}^{n_z}$ with $n_z=4 > n_x=3$. Thereafter, the expanded system is decomposed into two subsystems with overlapping states as described in the earlier sections.

The results from the computations using the full system model and that of the decomposed system model are presented in Figure~\ref{fig:example_2} and \ref{fig:computation_time}. Figure~\ref{fig:example_2} shows the convex hull of the sets in the distributed solution and the centralized solution. Again, it can be seen that the solution using the decomposed models converges to that of the full system model after the reconstruction. Furthermore, Figure~\ref{fig:computation_time} shows the scalability of the proposed distributed algorithm compared to the centralized algorithm. It can be seen that the algorithm using the decomposed model scales better than when the centralized model is used.

\begin{figure}[tbp] 
    \center{\includegraphics[width=0.7\columnwidth]{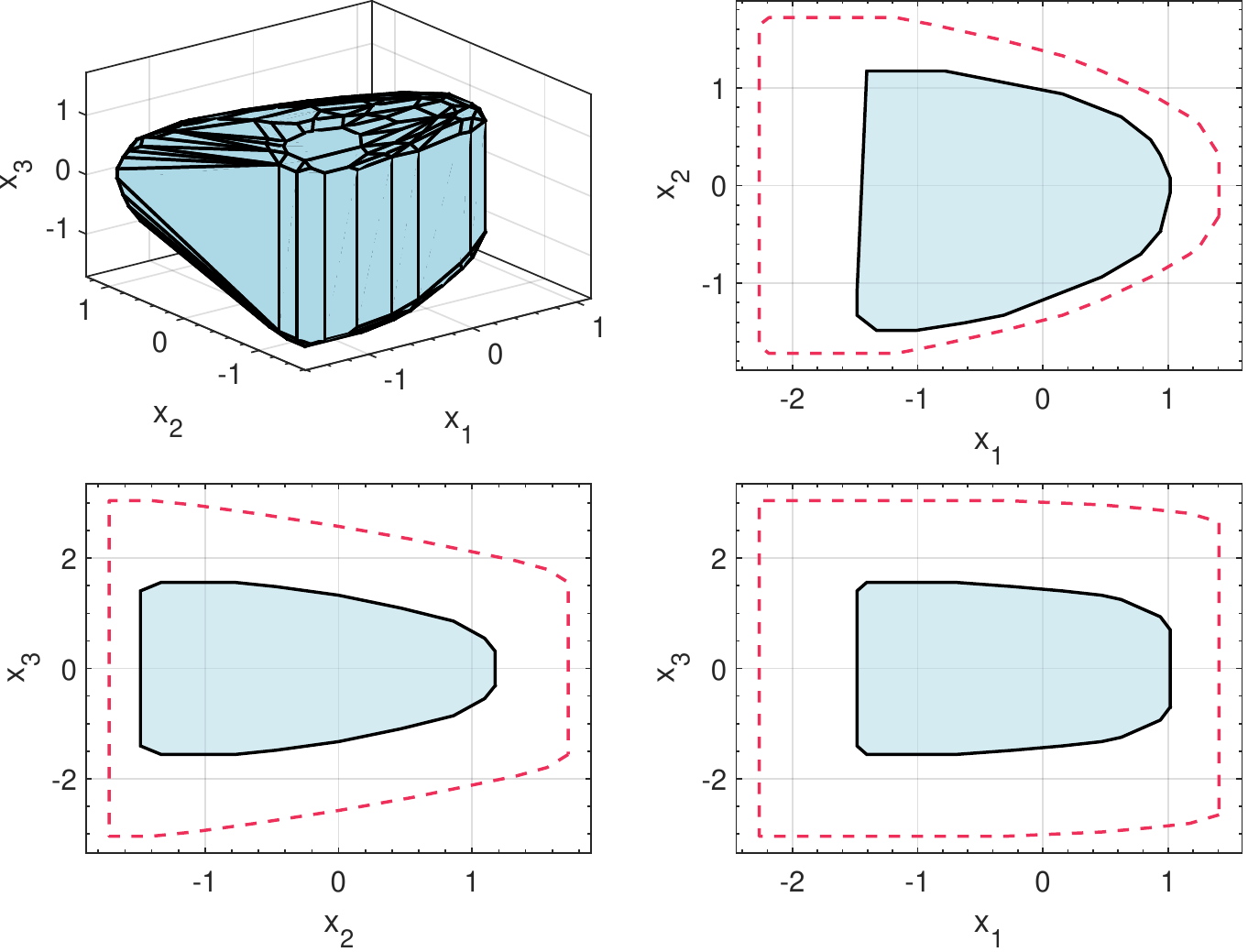}}
    \caption{\label{fig:example_2} Comparison of the convex hull of the cells in the solutions utilizing the decomposed model and the full system model. Blue shaded area: Solution from the computation utilizing the full system model. Dashed red: solution of the sets obtained from the distributed computation. Black: Final solution after reconstructing the solution and validating the cells. }
\end{figure}

\begin{figure}[tbp] 
    \center{\includegraphics[width=0.5\columnwidth]{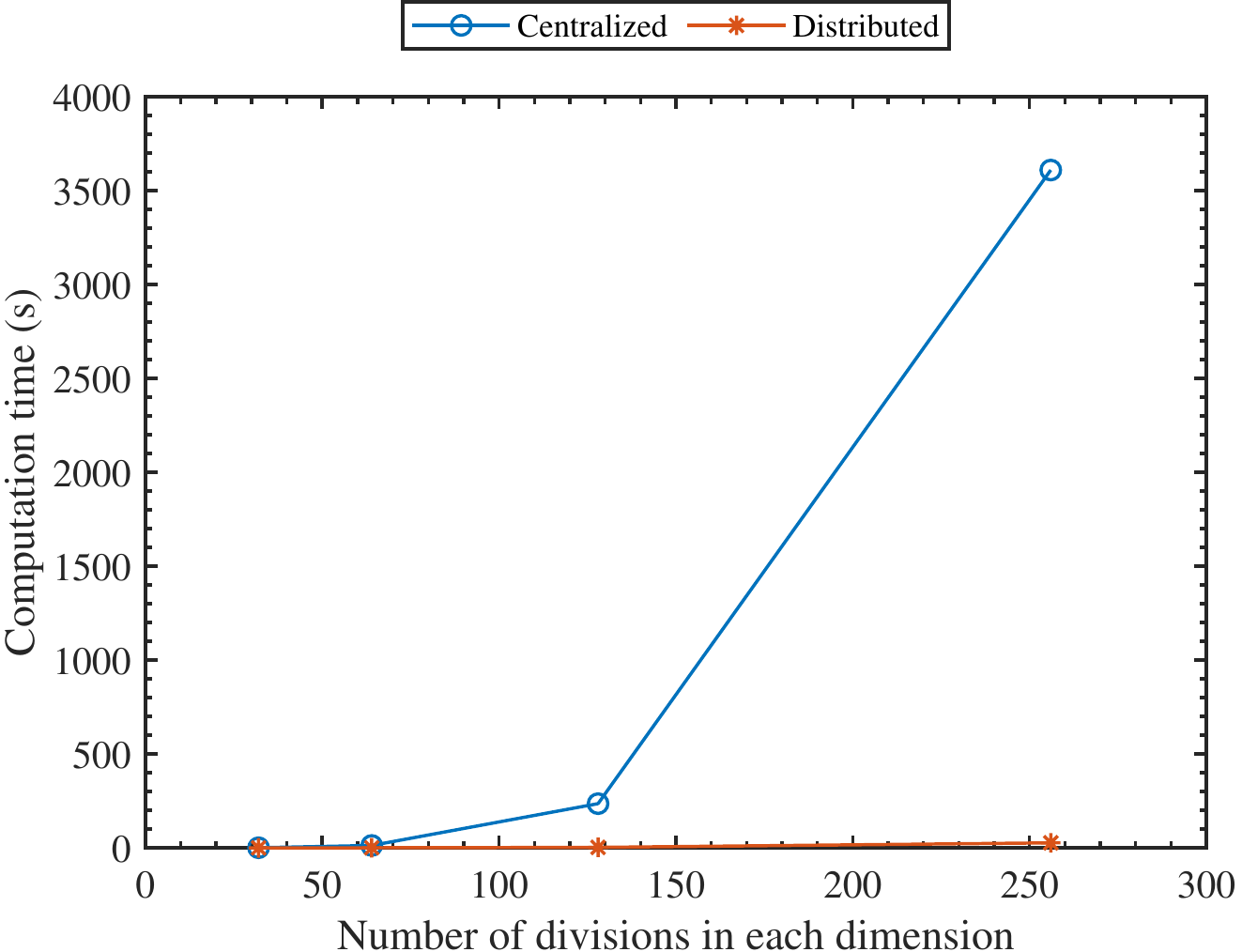}}
    \caption{\label{fig:computation_time} The computation vs the number of interval divisions in each dimension. At 256 interval divisions, it took more than an hour for the computation using the centralized model. An hour has been used for better visualization.}
\end{figure}

\subsection{Three Continuously stirred tank reactors in series example}

Let us consider an isothermal continuous-stirred tank reactor (CSTR) in which the following irreversible chained reactions occurs 
\begin{equation*}
    A \xrightarrow{k_1} B \xrightarrow{k_2} C
\end{equation*}
The reactor can be described by the following dimensionless modeling equations
\begin{subequations} \label{eqn:cstr_dimensionless_equations}
    \begin{align} 
        \frac{dx_1}{dt'} ={} & -x_1 + \text{Da}_1x_1 + u_1 \\
        \frac{dx_2}{dt'} ={} & - x_2 + \text{Da}_2 x_2^2 - \text{Da}_1 x_1 
    \end{align}
\end{subequations}
where Da$_1=1$ and Da$_2=2$ represent the dimensionless Damkholer number for the reactions 1 and 2, $x$ and $u$ are the dimensionless state and input vectors and $t'$ is dimensionless time. A similar version of the model was presented in \cite{Harmon1980}. By connecting three of the CSTR model in series (shown in Figure~\ref{fig:reactor_cascade}), the following six dimensional model is obtained.

\begin{figure}[tbp] 
    \center{\includegraphics[width=0.7\columnwidth]{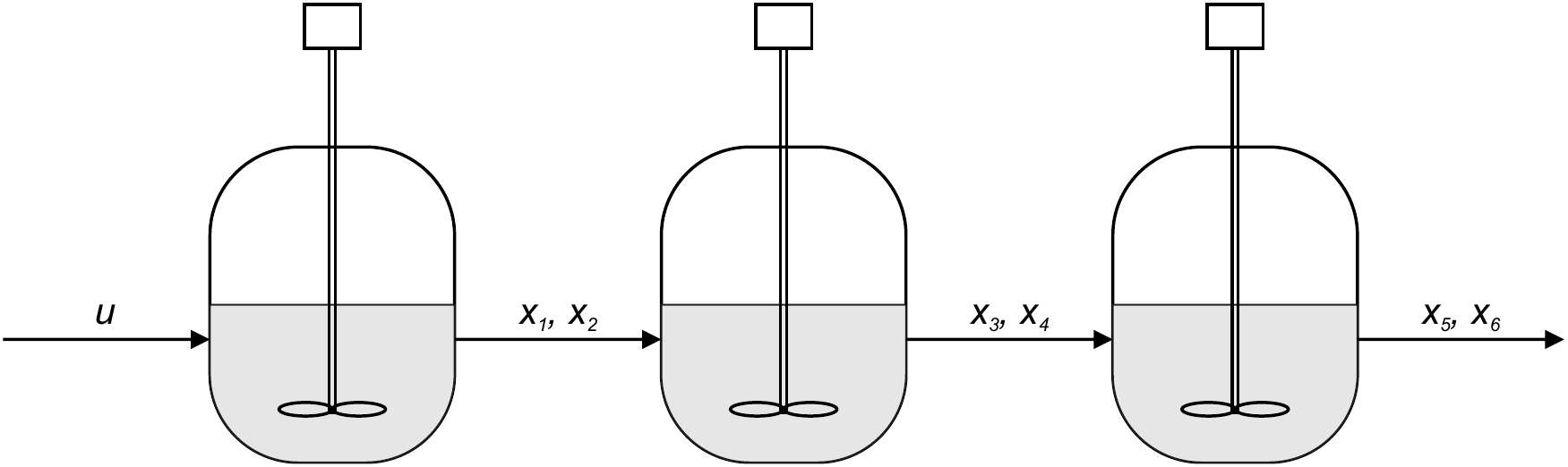}}
    \caption{\label{fig:reactor_cascade} A schematic diagram of the three CSTR in series.}
\end{figure}

\begin{subequations} \label{eqn:cstr_series}
    \begin{align} 
        \frac{dx_1}{dt'} ={} & -x_1 + \text{Da}_1x_1 + u_1 \\
        \frac{dx_2}{dt'} ={} & - x_2 + \text{Da}_2 x_2^2 - \text{Da}_1 x_1 \\
        \frac{dx_3}{dt'} ={} & -x_3 + \text{Da}_1x_3 + x_1 \\
        \frac{dx_4}{dt'} ={} & - x_4 + \text{Da}_2 x_4^2 - \text{Da}_1 x_3 + x_2\\
        \frac{dx_5}{dt'} ={} & -x_5 + \text{Da}_1x_5 + x_3 \\
        \frac{dx_6}{dt'} ={} & - x_6 + \text{Da}_2 x_6^2 - \text{Da}_1 x_5 + x_4
    \end{align}
\end{subequations}
The continuous-time model is discretized using a step size of 1 before usage in GIS algorithm. The constraints on the states and input are $X = \{ x \in \mathbb{R}^6 : 0 \leq x  \leq 1 \} $ and $U = \{ u \in \mathbb{R} : 0 \leq u  \leq 1 \}$ respectively. The results of each reactor is presented in Figures~\ref{fig:example_3_r1}--\ref{fig:example_3_r3}. The computations were performed on a computing server with single core Intel Xeon E5 processor with 2.1 GHz frequency and 100 GB of RAM. It took roughly 4 hours to compute the solution using the distributed approach with the reconstruction and validation step taking about 70 \% of the total computation time. It is worth mentioning that it is intractable to compute an approximation of the largest control invariant set for the six dimensional system using the standard GIS algorithm without system decomposition.
\begin{figure}[tbp] 
    \center{\includegraphics[width=0.5\columnwidth]{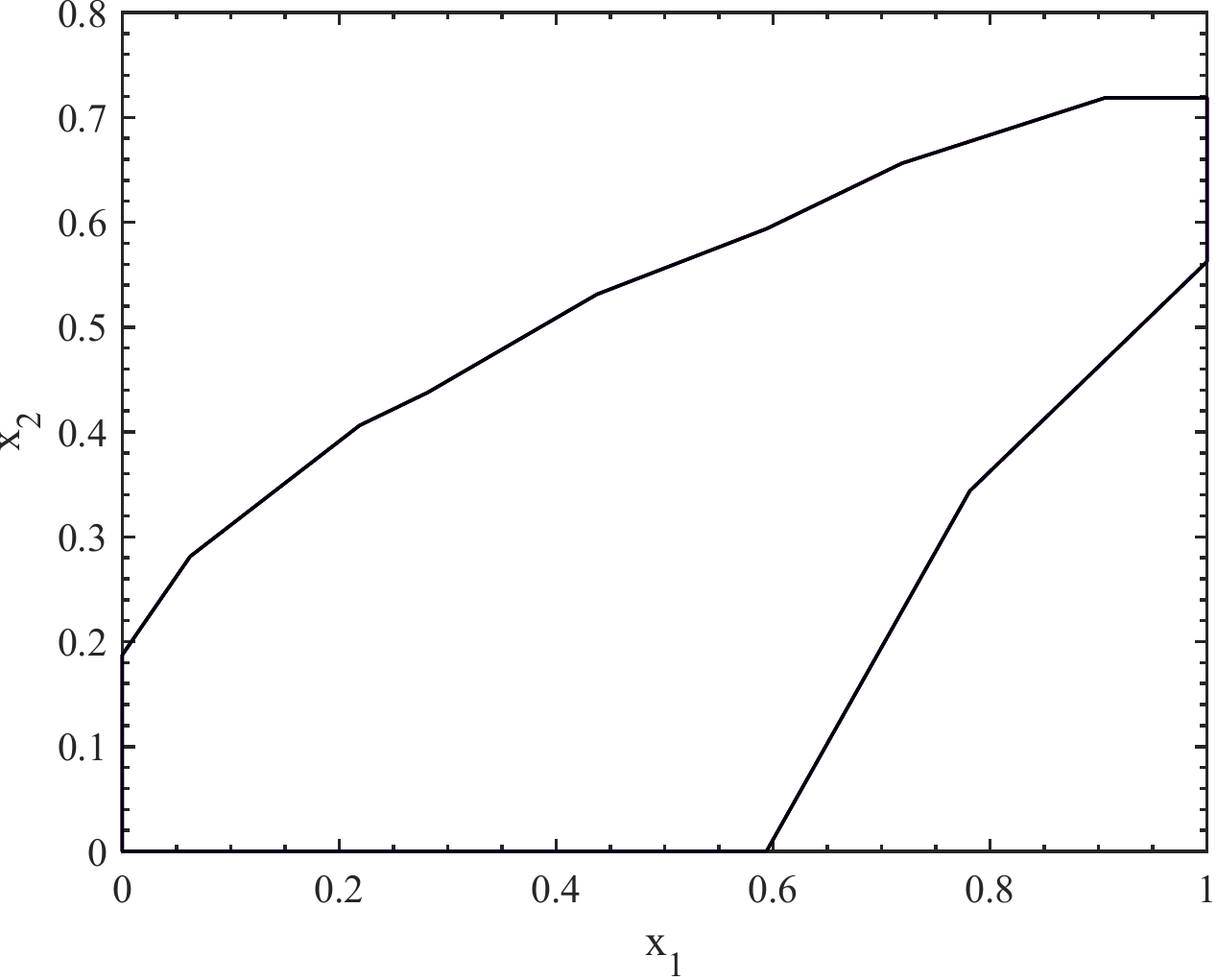}}
    \caption{\label{fig:example_3_r1} The solution for Reactor 1. }
\end{figure}

\begin{figure}[tbp] 
    \center{\includegraphics[width=0.5\columnwidth]{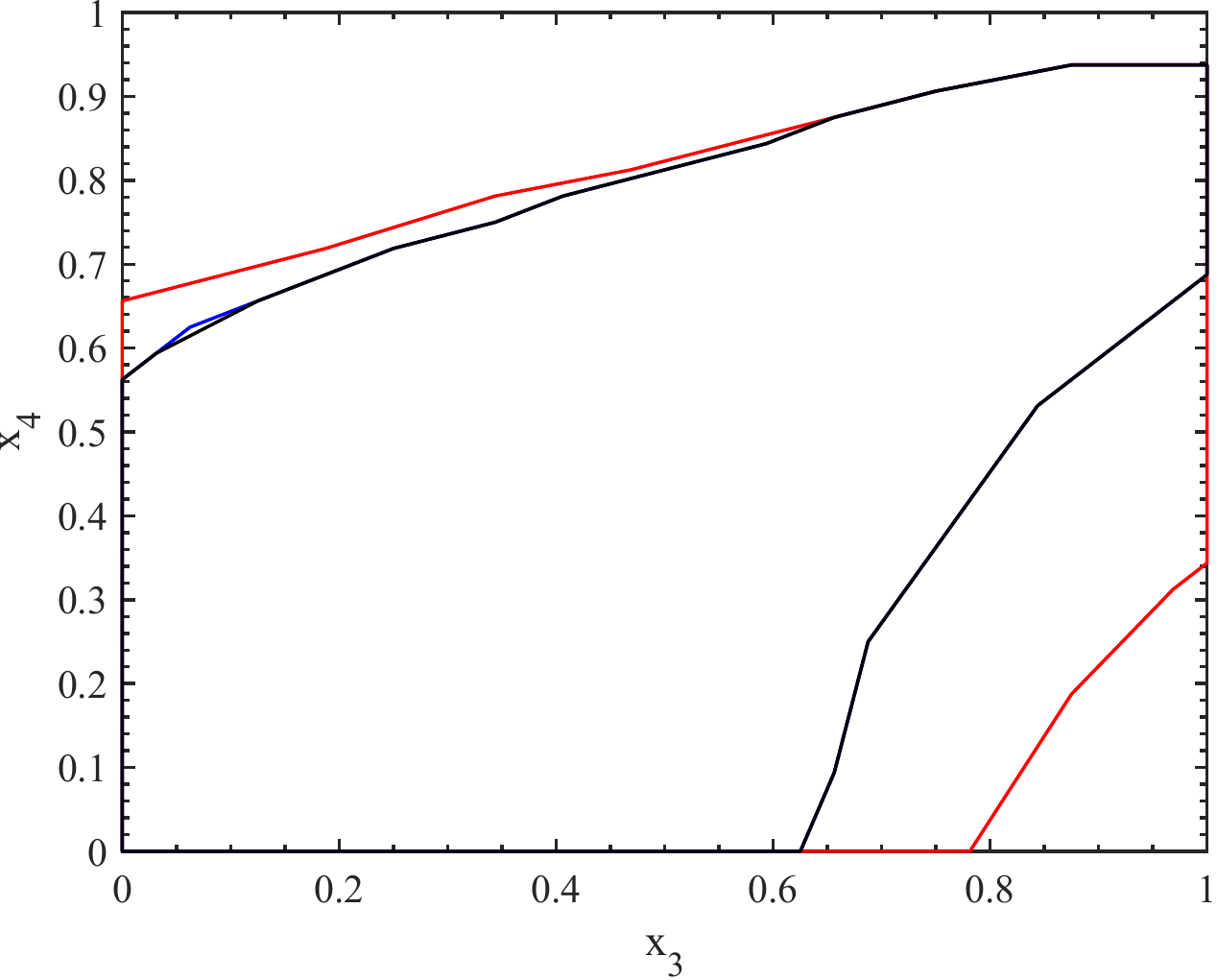}}
    \caption{\label{fig:example_3_r2} The solution for Reactor 2. Red: results from decentralized computation. Blue: results from the distributed computation. Black: Final solution after validation}
\end{figure}

\begin{figure}[tbp] 
    \center{\includegraphics[width=0.5\columnwidth]{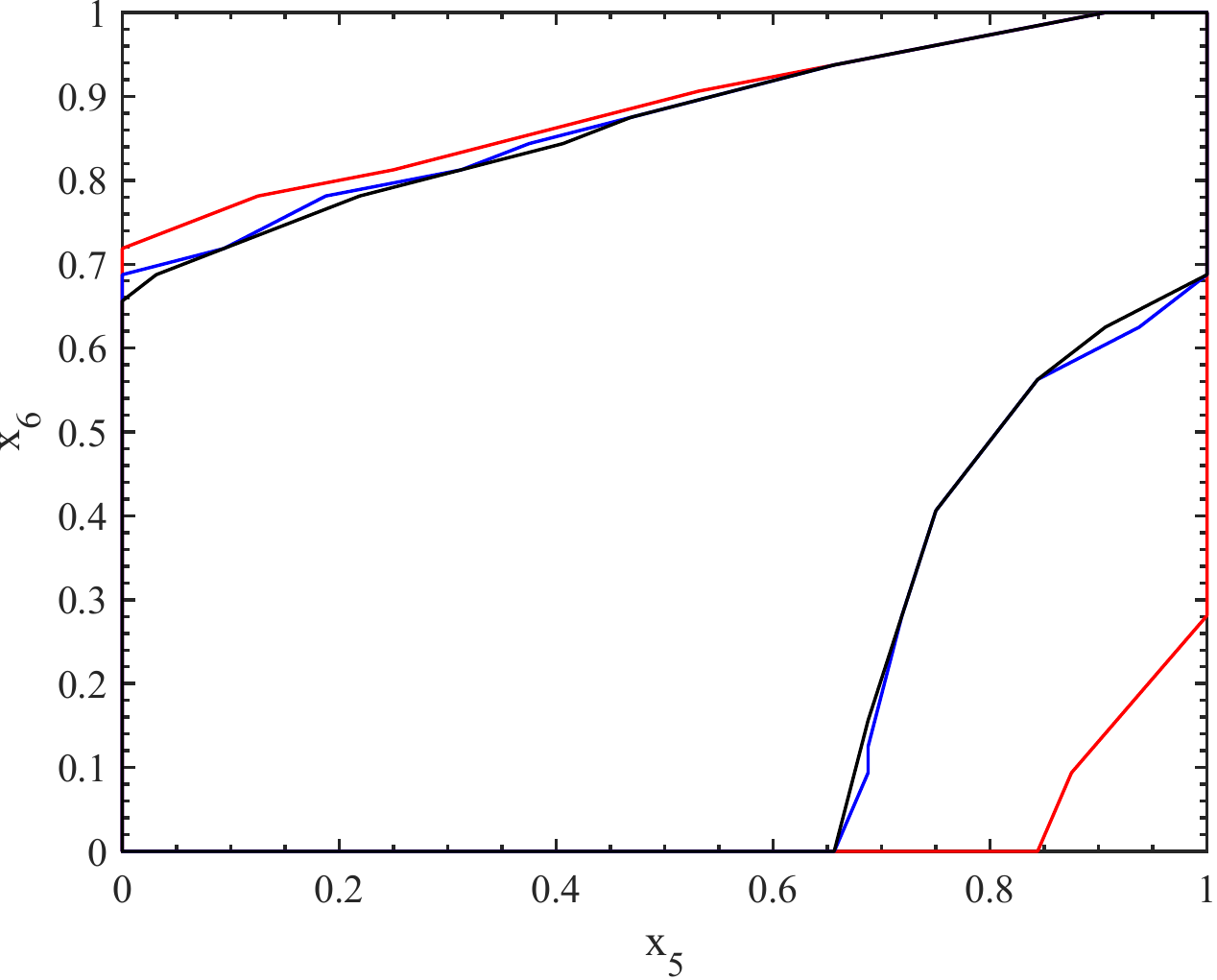}}
    \caption{\label{fig:example_3_r3} The solution for Reactor 3. Red: results from decentralized computation. Blue: results from the distributed computation. Black: Final solution after validation}
\end{figure}

\section{Concluding remarks} \label{sec:concluding_remarks}

Given a constrained discrete-time time-invariant control system, this paper obtains outer approximations of the largest control invariant set contained in the state constraint. For this purpose, we presented a graph-based distributed algorithm which approximates the dynamics of the control system as a directed graph allowing for analysis of the system using graph theory. This algorithm paves a promising way to overcome the ``curse of dimensionality'' encountered in control invariant set computation. Simulations using several numerical examples demonstrate the ability and effectiveness of the proposed method to approximate the largest control invariant set, just like the centralized algorithm. 

Future work will focus on further improving the computational efficiency of the proposed algorithm by utilizing recent advances in interdependent graph networks to analyze the distributed graphs for control invariance. This way the validation step can be conducted without having to reconstruct the set from the subsystem solutions. Another research direction is appropriate representation of the set obtained from the algorithm for MPC applications. The use of machine learning techniques will be explored.


\end{document}